\theoremstyle{plain}
\newtheorem{thm}{\protect\theoremname}
  \theoremstyle{definition}
  \newtheorem{defn}[thm]{\protect\definitionname}
\newcommand{\HFO}{{{\hbox{HFO}}}}
\newtheorem{obs}{Observation}
  \providecommand{\definitionname}{Definition}
\providecommand{\theoremname}{Theorem}
\begin{document}

\title{A Study on Hierarchical Floorplans of Order $k$}

\author{Shankar Balachandran\footnote{\texttt{bshankar@cse.iitm.ac.in}},
        Sajin Koroth\footnote{\texttt{sajin@cse.iitm.ac.in}}}

\date{12th September, 2011}
\maketitle
\begin{abstract}
A floorplan is a rectangular dissection which describes the relative
placement of electronic modules on the chip. It is called a mosaic
floorplan if there are no empty rooms or cross junctions in the rectangular
dissection. We study a subclass of mosaic floorplans called hierarchical
floorplans of order $k$ (abbreviated $\HFO_{k}$). A floorplan is
a hierarchical floorplan of order $k$ if it can be obtained by starting
with a single rectangle and recursively embedding mosaic floorplans
of at most $k$ rooms inside the rooms of intermediate floorplans.
When $k=2$ this is exactly the class of slicing floorplans as the
only distinct floorplans with two rooms are a room with a vertical
slice and a room with a horizontal slice respectdeively. And embedding
such a room is equivalent to slicing the parent room vertically/horizontally.
In this paper we characterize permutations corresponding to the Abe-labeling
of $\HFO_{k}$ floorplans and also give an algorithm for identification
of such permutations in linear time for any particular $k$. We give
a recurrence relation for exact number of $\HFO_{5}$ floorplans with
$n$ rooms which can be easily extended to any $k$ also. Based on
this recurrence we provide a polynomial time algorithm to generate
the number of $\HFO_{k}$ floorplans with $n$ rooms. Considering
its application in VLSI design we also give moves on $\HFO_{k}$ family
of permutations for combinatorial optimization using simulated annealing
etc. We also explore some interesting properties of Baxter permutations
which have a bijective correspondence with mosaic floorplans.
\end{abstract}
\tableofcontents{}

\section{Introduction}

In the design of VLSI circuits floorplanning is an important phase.
The aim of the floorplanning phase is to minimize certain objective
functions like interconnection wire length while considering only
the relative placement of the blocks. During floorplanning the designers
have additional flexibility in terms of size shape and orientation
of the modules on chip. The shape of the chip and that of the modules
is usually a rectangle. A floorplan describes the relative placement
of the blocks. Hence it is modeled mathematically as a dissection
of a rectangle with axis parallel (horizontal/vertical) non-intersecting
line segments which captures the relative placement of the blocks.

\subsection{Mosaic floorplans}

Mosaic floorplans are rectangular dissections where there are no cross
junctions and the number of rooms is equal to the number of modules
to be placed on the chip. That is there are no empty rooms.

\subsection{Slicing Floorplans }

A floorplan is called a slicing floorplan if it can be obtained from
a rectangle by dissecting it recursively horizontally or vertically.

\subsection{Slicing Tree}

A slicing floorplan can be represented by a rooted tree called slicing
tree\cite{wong1986new}. A slicing tree is a \textbf{rooted binary
tree} with the following properties:
\begin{itemize}
\item Every internal node is labeled either $V$ or $H$
\item Each leaf node correspond to a basic room in the final floorplan.
\end{itemize}
A slicing tree captures the order in which the basic rectangle was
divided recursively to obtain the final floorplan. But as shown in
figure \ref{Flo:Slicing_Skewed_Slicing} there can be multiple slicing
trees corresponding to the same floorplan. To avoid this problem we
define a sub-class of slicing trees called \textbf{skewed slicing
trees} which are essentially slicing trees which also obey the following
rule:
\begin{itemize}
\item A internal node (labeled from $\left\{ V,H\right\} $) and its right
child cannot have the same label.
\end{itemize}
This rule produces a unique tree corresponding to slicing floorplan
by eliminating symmetry associated with horizontal and vertical cuts
by ensuring that always the first operation from left to right and
top to bottom is the parent node at that level. This is achieved by
the extra rule above as it says that a $V$ node has to have an $H$
node or a leaf as the right child, because if the $V$ was not the
first one from left to right then $V$ ought to have another $V$
as its right child. Similarly symmetry associated with $H$ is also
removed by skewness.

\begin{figure}
\label{Flo:Slicing_Skewed_Slicing}\caption{Slicing Trees and Skewed Slicing Trees}
\begin{center}

\includegraphics[scale=0.4]{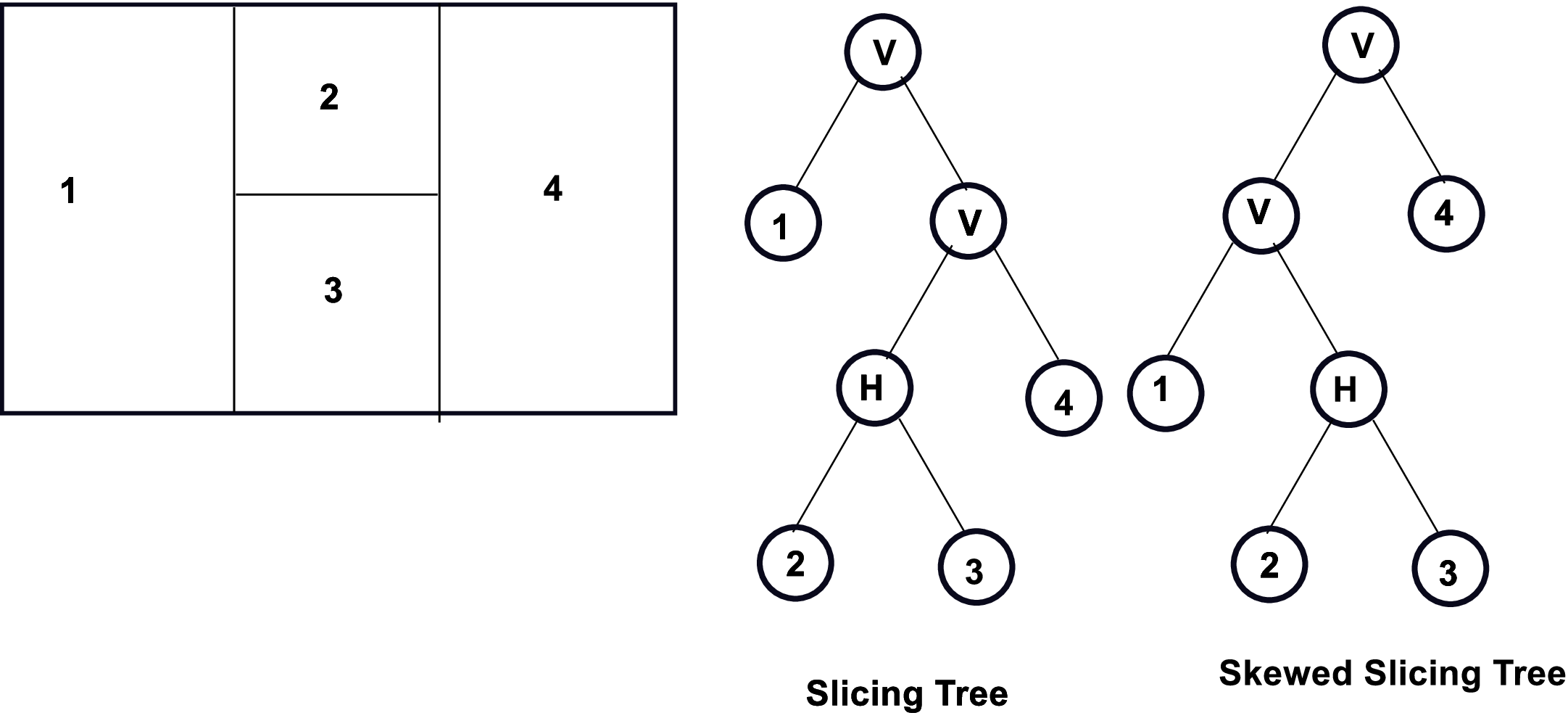}\end{center}
\end{figure}

\subsection{$\HFO_{5}$ floorplans}

A floorplan is said to be hierarchical of order 5 if it can be obtained
from a rectangle by recursively sub-dividing each rectangle into either
two parts by a horizontal or a vertical line segment or into five
parts by a wheel structure shown in figure \ref{Fig:Wheels}, the
only non-slicing mosaic floorplans with at most $5$ rooms.

\begin{figure}
\caption{Wheels}

\label{Fig:Wheels}

\begin{center}

\includegraphics[scale=0.28]{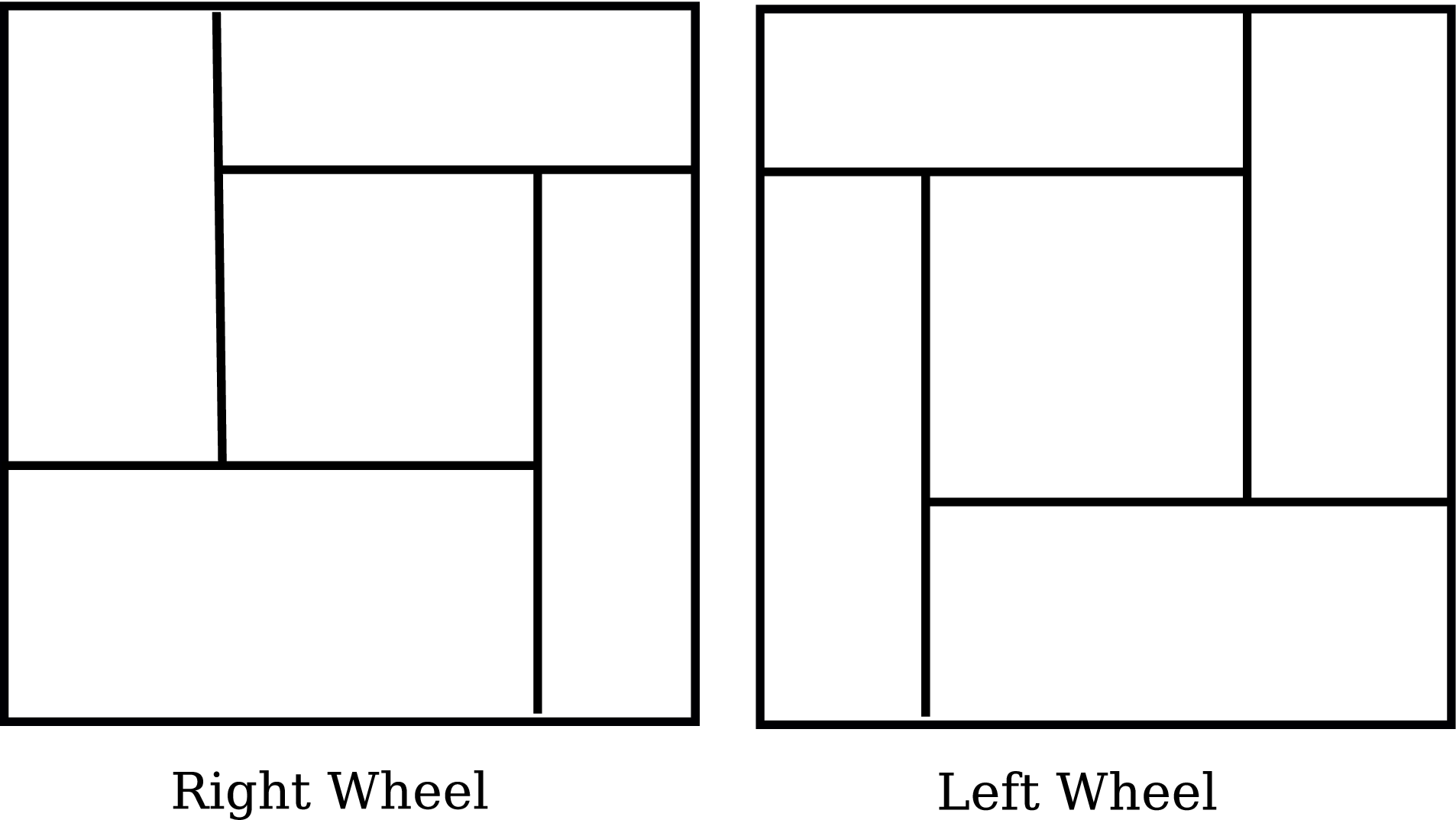}\end{center}
\end{figure}

\subsection{Hierarchical Floorplans of Order $k$}

The definition of hierarchical floorplans of order $5$ can be extended
to any $k$, by defining hierarchical floorplans of order $k$ as
all mosaic floorplans which can be obtained from a rectangle by recursively
sub-dividing each rectangle into $l$ parts ($l\leq k$) by embedding
a mosaic floorplan with $l$ rooms. When $k=2$ this becomes the class
of slicing floorplans and when $k=5$ it becomes $\HFO_{5}$.

\subsection{Pattern Matching Problem on Permutations}

Pattern matching problem for permutation is given a permutation $\pi\in S_{n}$
called \textbf{text} and another permutation $\sigma\in S_{k}$ called
\textbf{pattern} we would like to know if there exists $k$ indices
$i_{1}<i_{2}<i_{3}<i_{4}<\ldots<i_{k}$ such that the numbers $\pi[i_{1}],\pi[i_{2}],\pi[i_{3}],\pi[i_{4}],\ldots,\pi[i_{k}]$
are in the same relative order as $\sigma[1],\sigma[2],\sigma[3],\sigma[4],\ldots,\sigma[k]$,
that is $\pi[i_{h}]>\pi[i_{l}]$ if and only if $\sigma[h]>\sigma[l]$.
If $\pi$ contains such a sequence we call text $\pi$ \textbf{contains
the pattern} $\sigma$ and sub-sequence $\pi[i_{1}],\pi[i_{2}],\pi[i_{3}],\pi[i_{4}],\ldots,\pi[i_{k}]$
is said to \textbf{match the pattern} $\sigma$.

\subsection{Baxter Permutations}

A Baxter permutation on $[n]=1,2,3,\ldots,n$ is a permutation $\pi$
for which there are no four indices $1\leq i<j<k<l\leq n$ such that
\begin{enumerate}
\item $\pi[k]<\pi[i]+1=\pi[l]<\pi[j]$; or
\item $\pi[j]<\pi[i]=\pi[l]+1<\pi[k]$ 
\end{enumerate}
That is $\pi$ is a Baxter permutation if and only if whenever there
is sub-sequence matching the pattern $3142$ or $2413$ then the absolute
difference between the first and last element of the sub-sequence
is always greater than $1$. For example $2413$ is not Baxter as
the absolute difference between $2$ and $3$ is $1$ and $41352$
is Baxter even though the sub-sequence $4152$ matches the pattern
$3142$ but the absolute difference between first and last of the
sub-sequence is $|4-2|=2>1$.

\subsubsection{Algorithm FP2BP}

Eyal Ackerman et. al\cite{Ackerman20061674} in 2006 showed the existence
of a direct bijection between mosaic floorplans with $n$ rooms and
Baxter permutations of length $n$. They did this by providing two
algorithms, one which takes a mosaic floorplan and produces the corresponding
Baxter permutation and another which takes a Baxter permutation and
produces the corresponding mosaic floorplan. To explain the algorithm
we have to define the following operation on a mosaic floorplan.
\begin{defn}
[Top-Left Block deletion]Let $f$ be a mosaic floorplan with $n>1$
blocks and let $b$ be the top left block in $f$. If the bottom-right
corner of $b$ is a '$\dashv$'-(resp., '$\perp$') junction, the
one can delete $b$ from $f$ by shifting its bottom(resp., right)
edge upwards(resp., leftwards), while pulling the \textbf{T}-junctions
attached to it until the edge hits the bounding rectangle.

Similarly the block deletion operation can be defined for the other
corners of the floorplan. The algorithm for obtaining a baxter permutation
from a mosaic floorplan is the following.
\end{defn}

\begin{algorithm}

\SetKwInOut{Input}{Input}
\SetKwInOut{Output}{Output}

\Input{A mosaic floorplan $f$ with $n$ blocks}
\Output{A (Baxter) permutation of length $n$}

Label the rooms in their top-left deletion order from $\lbrace 1,\dots,n \rbrace$ \;
Obtain the permutation by arranging the room labels in their bottom-left deletion order \;

\caption{Algorithm FP2BP}
\end{algorithm}

The action of the algorithm on a mosaic floorplan is illustrated by
the figures \ref{Flo:FP2BP_Labelling} and \ref{Flo:FP2BP_Extraction}.
The permutation thus obtained is called the \textbf{Abe-label} of
the corresponding floorplan.

\begin{figure}
\label{Flo:FP2BP_Labelling}

\caption{FP2BP Labelling Phase}
\begin{center}

\includegraphics[scale=0.4]{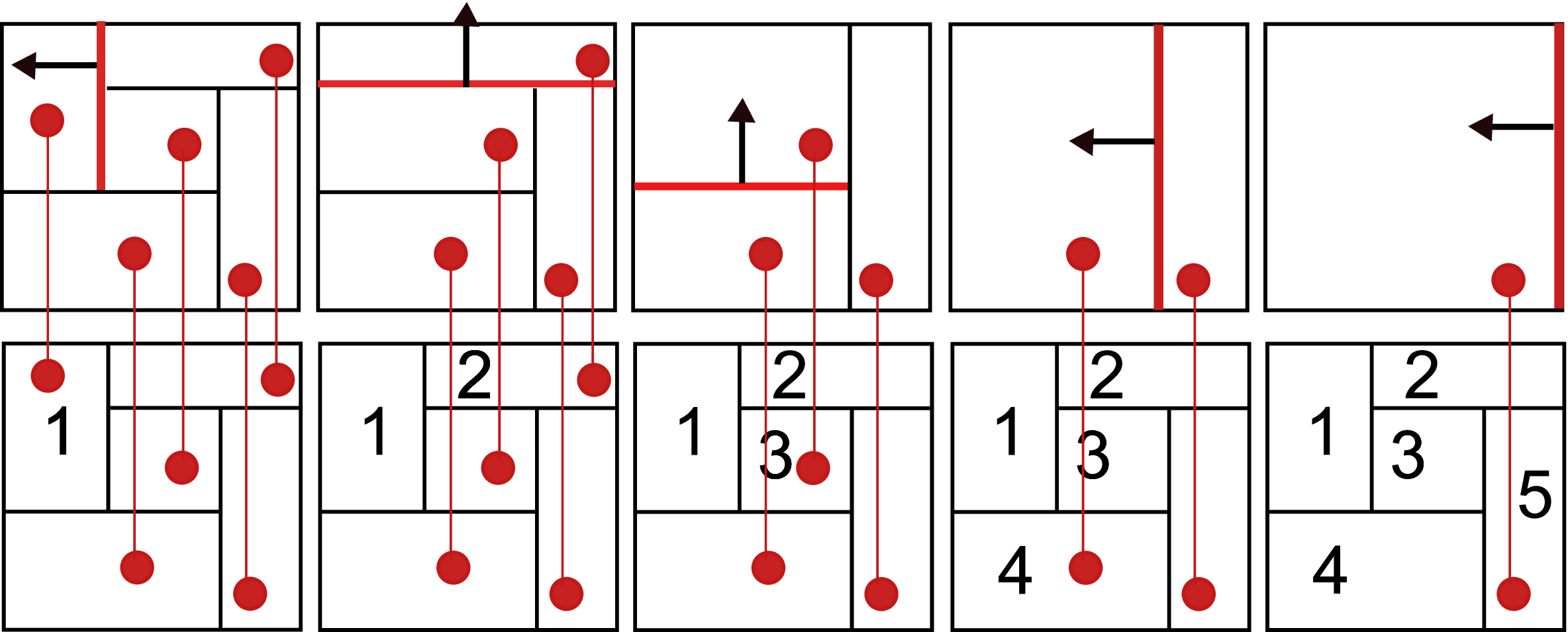}\end{center}
\end{figure}

\begin{figure}
\label{Flo:FP2BP_Extraction}\caption{FP2BP Extraction of permutation Phase}
\begin{center}

\includegraphics[scale=0.4]{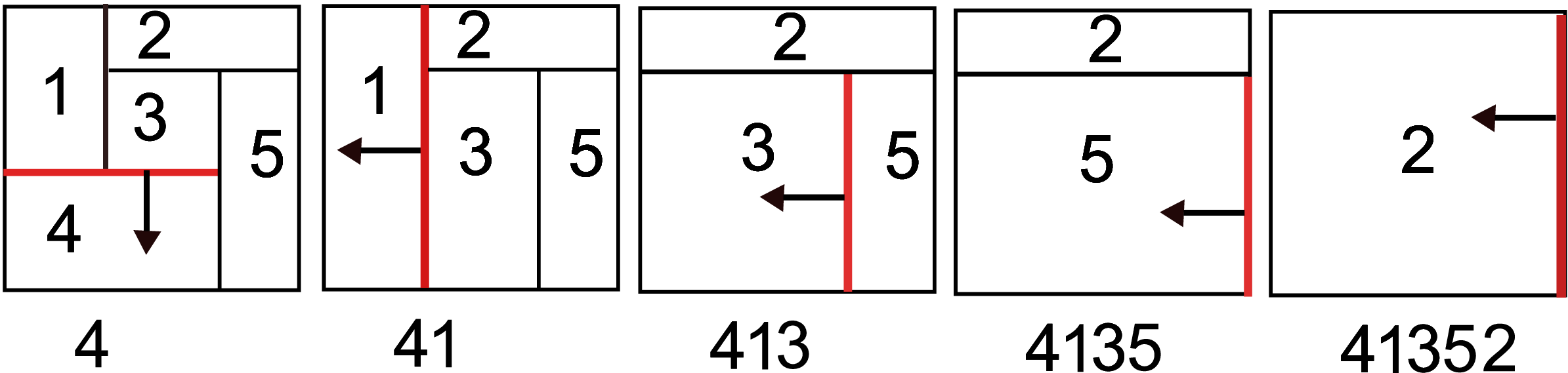}\end{center}
\end{figure}

\subsection{Simple Permutations}
\label{Defn:Block}
A \textbf{block} in a permutation is a set of consecutive positions
(called segments) which is mapped to a range of values. The trivial
block of a permutation are singleton blocks and the block $1\dots n$.
For example in the permutation $\pi=3421$ segment $1\dots3$ is a
block as $\pi$ maps $1\dots3$ to $\left\{ 2,3,4\right\} $ which
is a range but the segment $2\dots4$ is not a block as it is mapped
to $\left\{ 1,2,4\right\} $ which is not a range as $3$ is missing.
A permutation is called simple when all its block are trivial blocks.
An example of a simple permutation is $\pi=41352$. Also note that
$\pi$ above is the Abe-label of right rotating wheel. A \textbf{one-point
deletion} on a permutation $\pi$ is deletion of a single element
at some index $i$ and getting a new permutation $\pi^{'}$ on $[n-1]$
by rank ordering the remaining elements. For example one-point deletion
at index $3$ of $41352$ gives $4152$ which when rank ordered gives
the permutation $3142$.

\subsection{Block Decomposition of a permutation}

Simple permutations are an interesting class of permutations for the
reason that arbitrary permutations can be built just using simple
permutations. A \textbf{block decomposition\cite{Albert20051}} of
a permutation $\sigma$ is a partition of $\sigma$ into blocks. A
block decomposition is non-trivial if there is at least one block
which is non-trivial. Given the block decomposition of $\sigma$,
its pattern is the permutation defined by the relative order of the
blocks. For example $451362$ has the non-trivial decomposition $(45)(1)(3)(6)(2)$
with the pattern of decomposition being $41352$. We can think of
$453162$ being constructed from $41352$ by inflating each of the
elements $12,1,1,1$ and $1$ into blocks. This can be represented
as wreath product of permutations as $451362=41352\left[12,1,1,1,1\right]$.

\subsection{Exceptionally Simple Permutations}

The following simple permutations are called \textbf{exceptional :}

\begin{gather}
246\dots(2m)135\dots(2m-1)\\
(2m-1)(2m-3)\dots1(2m)(2m-2)\dots2\\
(m+1)1(m+2)2\dots(2m)m\\
m(2m)(m-1)(2m-2)\dots1(m+1)
\end{gather}

They are called exceptionally simple because no one-point deletion
of an exceptionally simple permutation can give a simple permutation.
For example $246135$ is an exceptionally simple permutation of length
$6.$ If we delete 2 from our example we get 35124 which is not simple
as the segment $3\dots4$ is mapped to $12$ hence is non-trivial
block. It can be easily verified that every one point deletion from
above permutation results in a non-trivial block. The interesting
thing about exceptionally simple permutations is that there is always
a two point deletion which yields a simple permutation of length $n-2$\cite{Albert20051}.
For example if you delete ${1,2}$ from $246135$ you get $2413$,
a simple permutation.

\section{Previous Work}

Wong and Liu\cite{wong1986new} were the first to consider the use
of stochastic search methods like simulated annealing for floorplan
optimization problem. In their seminal paper about simulated annealing
based search on the family of slicing floorplans (\cite{wong1986new}
), they introduced slicing trees and proved that there is a one-one
correspondence between slicing floorplans with $n$ rooms and skewed
slicing trees with $n$ leaves. They also proved that there is a one-one
correspondence between skewed slicing trees with $n$ leaves and normalized
polish expression of length $2n-1$. Wong and The \cite{WongAndThe}
gave a representation of hierarchical floorplans of order $5$ extending
the normalized polish expressions of slicing floorplans to incorporate
wheels which are the only non-slicing floorplans with at most five
rooms. They also described neighbourhood moves for simulated annealing
search on $\HFO_5$ floorplans based on this representation.

Sakanushi et. al \cite{Sakanushi1193019} were the first to consider
the number of distinct mosaic floorplans. They found a recursive formula
for this number. Yao et. al\cite{Yao606607} showed a bijection between
mosaic floorplans and twin binary trees whose number is known to be
the number of Baxter permutations (\cite{dulucq1998baxter}). They
have also shown that the number of distinct slicing floorplans containing
$n$ blocks is the $n-1$th Shr\"{o}der number. Later Ackerman et.
al\cite{Ackerman20061674} constructed a bijection between mosaic
floorplans with $n$-rooms to Baxter permutations on $[n]$. They
also proved that this bijection when restricted to slicing floorplans
gives a bijection from slicing floorplans with $n$-rooms to separable
permutations on $[n]$. And with this bijection a unique permutation,
corresponding to any mosaic floorplan or naturally for a floorplan
which belongs to a subclass of mosaic floorplans, can be obtained.

Simple permutations and their properties were studied first by \cite{Albert20051}.
They proved a crucial theorem about exceptionally simple permutations
using a result from a paper by \cite{Schmerl:1993:CIP:157146.157164}
about critically in-decomposable partially ordered sets.

Shen et. al\cite{shen2003bounds} presented a generating function
based approach to count skewed slicing trees, to obtain a tight bound
on number of slicing floorplans with $n$ rooms. Chung et. al\cite{Chung1978382}
obtained closed form expression for the number of Baxter permutations
of length $n$ using a generating tree based approach. 

\section{An Infinite Hierarchy}

Hierarchical floorplans form an infinite hierarchy whose levels are
$\HFO_k$ floorplans for a specific value of $k$ and it is such
that each level has at least one floorplan which is not contained
in the level below.
\begin{thm}
For any $k\geq7$, $\HFO_k\setminus\HFO_{k-1}\neq\phi$.
\end{thm}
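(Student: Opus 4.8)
The plan is to exhibit, for each $k\geq 7$, an explicit mosaic floorplan with $k$ rooms that is a hierarchical floorplan of order $k$ (trivially, since it is a single embedding of a $k$-room mosaic floorplan into a rectangle) but that cannot be assembled by recursively embedding mosaic floorplans with strictly fewer than $k$ rooms. The natural candidate is a ``generalized wheel'': a mosaic floorplan on $k$ rooms whose Abe-label is a \emph{simple} permutation in the sense of Section~\ref{Defn:Block}. Recall that embedding a mosaic floorplan with $\ell$ rooms into a room of an intermediate floorplan corresponds, on the level of Abe-labels, to inflating one element into a block of size $\ell$ — i.e.\ to a wreath-product / block-decomposition step. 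Hence a floorplan built from pieces each of size $<k$ has an Abe-label admitting a non-trivial block decomposition into blocks each of size $<k$; equivalently its Abe-label is \emph{not} simple. So it suffices to produce, for every $k\ge 7$, a Baxter permutation of length $k$ that is simple and arises as the Abe-label of some mosaic floorplan.

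First I would recall that the Abe-labeling of Ackerman et al.\ is a bijection between mosaic floorplans on $n$ rooms and Baxter permutations of length $n$, and that under this bijection the slicing / hierarchical decomposition corresponds exactly to block decomposition of the permutation (this is the content of the ``separable $=$ slicing'' specialization cited in the excerpt, extended to the $\HFO_k$ setting). Thus $\HFO_k\setminus\HFO_{k-1}\neq\phi$ is equivalent to: there exists a simple Baxter permutation of length $k$. Second, I would exhibit such permutations. The four families of exceptional simple permutations listed in the excerpt are a good source: for instance the family $(m+1)\,1\,(m+2)\,2\,\cdots\,(2m)\,m$ gives a simple permutation of every even length $2m$, and one checks directly that it is Baxter (any occurrence of $2413$ or $3142$ in it has first/last elements differing by more than $1$, because consecutive small values and consecutive large values are interleaved at distance $m$). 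For odd lengths one can use the family $(2m-1)(2m-3)\cdots 1\,(2m)(2m-2)\cdots 2$ of length $2m$ after a one-point modification, or more simply inflate nothing and instead take a simple Baxter permutation of odd length directly — e.g.\ adjoin a suitably placed fixed point is not allowed (it creates a block), so instead I would display an explicit odd-length simple Baxter permutation (such as $3\,5\,7\,\ldots\,1\,\ldots$, a zig-zag pattern) and verify simplicity and the Baxter condition by hand. The threshold $k\ge 7$ appears because for $k\le 6$ every simple Baxter permutation either fails to be Baxter (like $2413$, $3142$) or does not exist in the required length, so the smallest genuinely new level is $k=7$; I would tabulate the small cases $k=2,3,4,5,6$ to confirm $\HFO_6 = \HFO_5$ is consistent with the claim and that $k=7$ is the first strict inclusion.

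The remaining technical step is to confirm that each exhibited permutation is \emph{realizable}, i.e.\ is genuinely the Abe-label of a mosaic floorplan — but this is immediate from the Ackerman et al.\ bijection, since \emph{every} Baxter permutation is realized by a unique mosaic floorplan, so realizability reduces entirely to the Baxter check. I expect the main obstacle to be the bookkeeping in the equivalence ``hierarchical assembly of a floorplan $\Longleftrightarrow$ block decomposition of its Abe-label'': one must argue that embedding an $\ell$-room mosaic floorplan into a single room of an intermediate floorplan $f$ produces a floorplan whose Abe-label is obtained from that of $f$ by inflating one position into an interval of length $\ell$, and conversely that every non-trivial block of a Baxter permutation corresponds to such a nested sub-floorplan. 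Once that dictionary is in place, the theorem follows by producing one simple Baxter permutation of each length $k\ge 7$, which the exceptional families above supply. A clean way to finish is therefore: (1) establish the dictionary, (2) reduce the claim to existence of a length-$k$ simple Baxter permutation, (3) exhibit one for each $k\ge 7$ from the exceptional families (even $k$) plus one explicit odd-length construction, (4) verify the Baxter condition for these by the interleaving-distance argument.
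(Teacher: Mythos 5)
Your reduction — that $\HFO_k\setminus\HFO_{k-1}\neq\phi$ follows from the existence of a simple Baxter permutation of length $k$, via the dictionary between hierarchical assembly and block decomposition of the Abe-label — is sound in principle, and it is essentially the content of the paper's later sections on Uniquely $\HFO_k$ floorplans. But the step that carries all the weight, exhibiting a simple Baxter permutation of each length $k\geq 7$, fails as written: the exceptional simple permutations you propose as witnesses are \emph{not} Baxter. For example, $246135$ contains the values $2,4,1,3$ at positions $1,2,4,5$, an occurrence of $2413$ whose first and last entries differ by exactly $1$; and your even-length family $(m+1)\,1\,(m+2)\,2\cdots(2m)\,m$ gives, for $m=3$, the permutation $415263$, which contains $4,1,5,3$ at positions $1,2,3,6$ — an occurrence of $3142$ with $|4-3|=1$. (For $m=2$ that family is literally $3142$.) The same interleaving of consecutive small and large values that makes these permutations exceptionally simple is precisely what creates forbidden $2413/3142$ occurrences with adjacent endpoint values, so the ``interleaving-distance'' argument you sketch goes the wrong way. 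Your odd-length construction is likewise unspecified and unverified, so no witness is actually produced for any $k\geq 7$.

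The paper avoids this existence problem entirely by working geometrically: it starts from an irreducible $7$-room (resp.\ $8$-room) floorplan — one with no proper subset of rooms enclosed in an enveloping rectangle — and shows that cutting a line segment that spans the bounding box (with no parallel segment to its left) halfway and inserting a \textbf{T}-junction adds two rooms while preserving irreducibility, then iterates. This produces the required floorplans for all odd $k\geq 7$ and all even $k\geq 8$ directly, which (via the bijection) also yields the simple Baxter permutations your argument needs. To repair your proof you would have to replace the exceptional families with genuinely Baxter simple permutations of every length $k\geq 7$ — for instance the Abe-labels of the paper's constructed floorplans — at which point you have essentially reproduced the paper's construction.
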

\begin{proof}
An $\HFO_k$ floorplan which is not $\HFO_j$ for 
$j<k$ should be such that you should not be able find a proper
subset of basic rectangles in the given floorplan which are contained
in an enveloping rectangle. Because if you are able to find such a
set of rectangles then you will be able to construct this floorplan
hierarchically by starting with the floorplan obtained by removing 
all the basic rectangles having the above mentioned property leaving
only the enveloping rectangles intact and then embedding the floorplan
constituted by the basic rectangles which were contained in the enveloping
rectangle. 
We will first show that for any odd number $k\geq7$
there is a hierarchical floorplan of order $k$ which is not hierarchical
floorplan of order $j$ for any $j<k$. The proof is evident from
the geometric construction given in Figure:\ref{Flo:hfo7_to_hfo9}.
The procedure is to start with an $\HFO_7$ floorplan which is not
$\HFO_j$ for any $j<7$ and then take a line-segment which touches the bounding
box of all rectangles so that there are no parallel line-segments to
its left and then cut it half-way through as shown in the figure
and insert a \textbf{T}-junction. This way the resulting floorplan
will also have no proper subset of basic rectangles which are contained
in an enveloping rectangle, thus making it not contained in any 
lower levels of Hierarchy. The procedure increases the number of 
rooms in the floorplan by $2$. Note that the in the floorplan
obtained using the above procedure there exists a line-segment
which touches the bounding box of all rectangles so that there are
no parallel line-segments to its left. Hence it can be applied recursively to get
an $\HFO_k$ having the above mentioned properties for $k$ odd.

\begin{figure}
\caption{Constructing $\HFO_9$ from $\HFO_7$}
\label{Flo:hfo7_to_hfo9}\begin{center}

\includegraphics[scale=0.3]{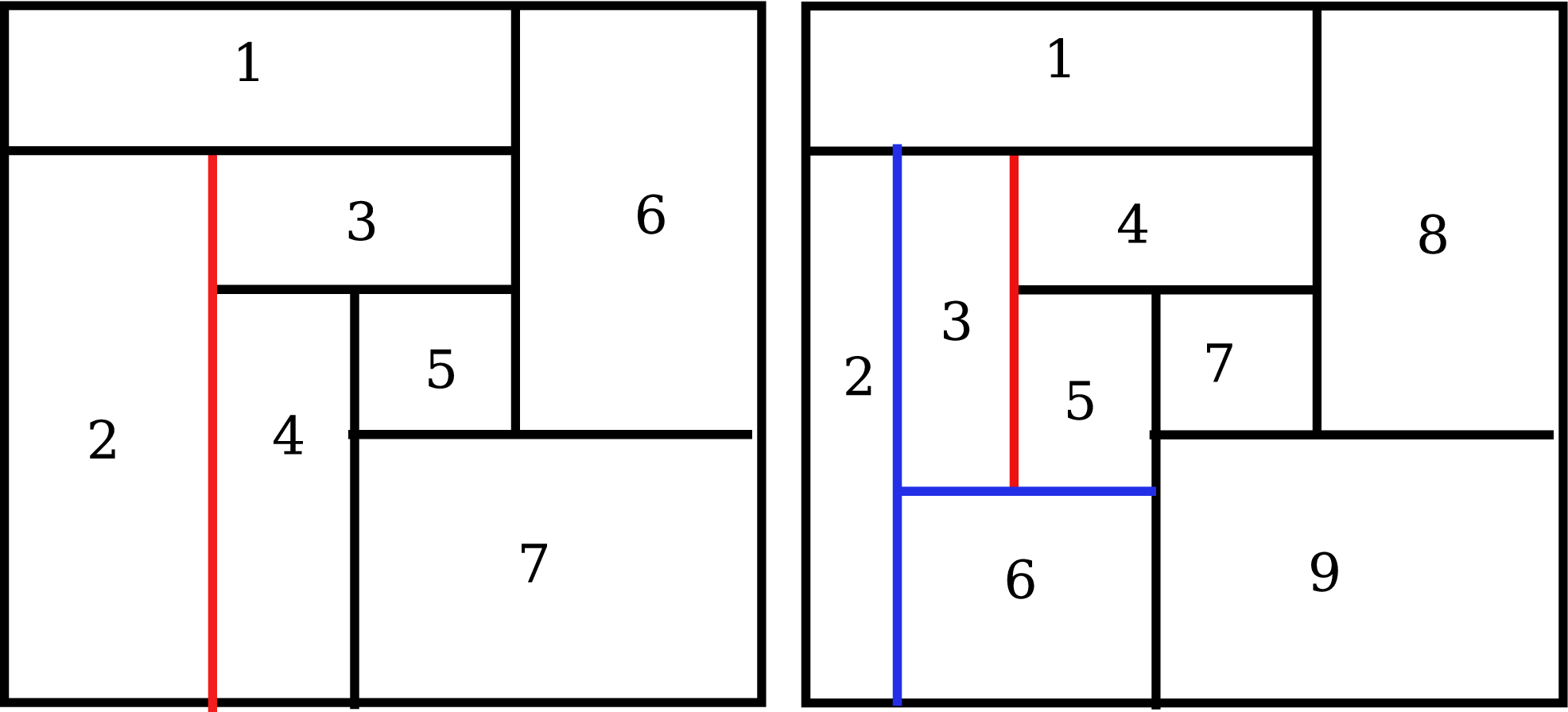}\end{center}
\caption{Constructing $\HFO_{10}$ from $\HFO_8$}
\label{Flo:hfo8_to_hfo10}\begin{center}

\includegraphics[scale=0.3]{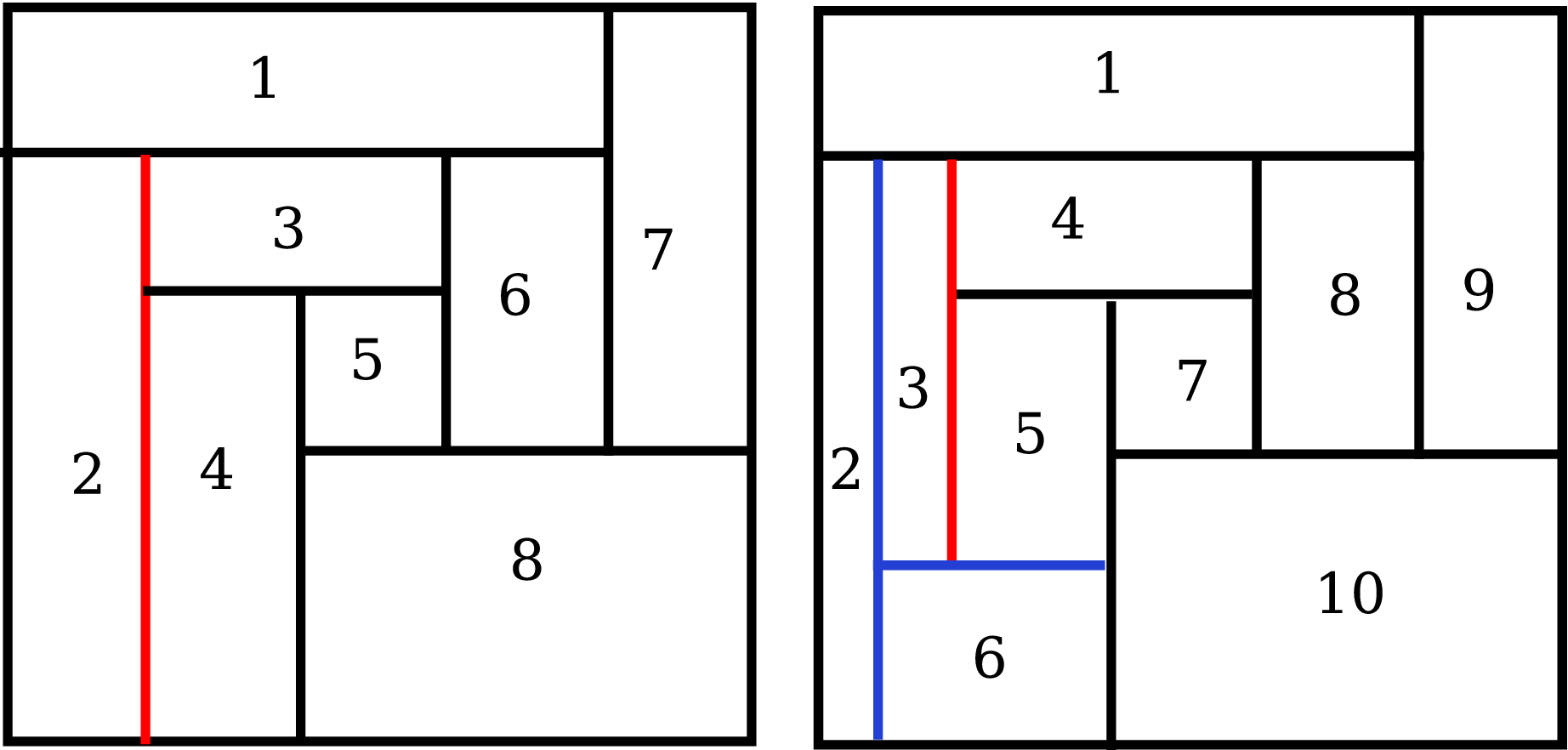}\end{center}
\end{figure}

For an even $k\geq8$ we use the same proof technique but we start
from an $\HFO_8$ which is not an $\HFO_j$ for any $j<8.$ Figure:\ref{Flo:hfo8_to_hfo10}
demonstrates the construction. The construction can applied recursively
to prove the existence of an $\HFO_k$ which not $\HFO_j$ for any $j<k$
and $k$ even.\end{proof}

\section{Uniquely $\HFO_k$}

We call an $\HFO_k$ floorplan with $k$-rooms which is not a $\HFO_j$ floorplan
for any $j<k$, Uniquely $\HFO_k$. We will prove that they are in
bijective correspondence with those Baxter permutations of length
$k$ which are simple permutations of length $k$. Given an $\HFO_k$ floorplan with $k$
  rooms if you are able to find set of $j, 1 < j < k$ basic blocks
  which are contained in an enveloping rectangle then you will be able
  to generate this floorplan in the following way.  Consider the
  floorplan obtained by replacing these $j$ basic blocks by just the
  enveloping rectangle and then place the $\HFO_j$ floorplan formed
  by these basic rectangles inside that room. Hence it is clear that
  the resulting floorplan belongs to $\HFO_{\max \lbrace k-j, j
    \rbrace}$, and since both $k-j$ and $j$ are strictly greater than
  one we get that the resulting floorplan is not Uniquely
  $\HFO_k$. So if you are not able to find a non-trivial set of
  basic rectangles which together form another rectangle in a mosaic
  floorplan of $k$ rooms then it is a Uniquely $\HFO_k$ floorplan.
  We need the following crucial observation for formal proof of the characterization.
  \begin{obs}
  \label{Obs:BlockEnvolopingRectangle}
   In the permutation $\pi$ produced by the FP2BP algorithm run on a 
   floorplan $f$ if there exists \textbf{block}\footnote{See \ref{Defn:Block} for the
   definition of a block} then there is an enveloping rectangle containing the 
   rooms labeled by the numbers in the block and nothing else, in $f$.
  \end{obs}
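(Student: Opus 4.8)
The plan is to convert a non-trivial block of $\pi$ into a genuine rectangular sub-floorplan by peeling $f$ from two corners. Fix the block: there are indices $1\le a\le b\le n$ and values $c\le d$ with $d-c=b-a$ and $\{\pi[a],\dots,\pi[b]\}=\{c,\dots,d\}$; let $S$ be the set of rooms of $f$ whose Abe-label lies in $\{c,\dots,d\}$, equivalently whose index in the bottom-left deletion order lies in $\{a,\dots,b\}$. If $|S|=1$ the claim is trivial, so assume $2\le |S|\le n-1$. The goal is an enveloping rectangle tiled exactly by the rooms of $S$.

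The first ingredient I would isolate is the compatibility of FP2BP with the corner deletions of Definition [Top-Left Block deletion]: deleting the top-left room of a mosaic floorplan (it always carries Abe-label $1$) replaces its Abe permutation by the one obtained by deleting the entry of value $1$ and rank-reducing, and, symmetrically, deleting the bottom-left room deletes the first entry and rank-reduces. Put geometrically, a corner deletion does not change the relative top-left and bottom-left deletion orders of the surviving rooms; this is exactly the kind of consistency statement underlying the bijection of \cite{Ackerman20061674}, but it must be stated with care since a corner deletion does modify the rooms adjacent to the one removed. Granting it, I apply top-left deletion $c-1$ times to strip off the rooms with labels $1,\dots,c-1$ (none of these lies in $S$, since $S$ occupies the contiguous position-interval $[a,b]$ while the stripped rooms carry values $<c$, hence positions outside $[a,b]$). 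In the resulting floorplan $f^{(1)}$ the rooms of $S$ are exactly the first $p:=d-c+1$ rooms in top-left order, and by the stability statement $S$ still occupies a contiguous interval $[a',b']$ of bottom-left positions in $f^{(1)}$. Now I apply bottom-left deletion $a'-1$ times to $f^{(1)}$: the rooms removed have bottom-left positions $1,\dots,a'-1$, hence Abe-labels $>p$ in $f^{(1)}$, hence again lie outside $S$. Write $g$ for the floorplan so obtained. Then in $g$ the rooms of $S$ are simultaneously the first $p$ rooms in the bottom-left deletion order and — since the just-removed rooms were the late ones in top-left order — still the first $p$ rooms in the top-left deletion order; moreover $p<|g|$ (if $p=|g|$ then $S$ would already tile the common bounding box of $g$ and $f$, contradicting $|S|<n$).

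What remains is the geometric heart of the argument: if in a mosaic floorplan $g$ a set $S$ is at once the first $p$ rooms in the top-left deletion order and the first $p$ rooms in the bottom-left deletion order, with $1\le p<|g|$, then $S$ tiles a rectangle; concretely $g$ is a full-height vertical split whose left part is exactly $S$. The hypothesis says the Abe permutation of $g$ carries $\{1,\dots,p\}$ onto $\{1,\dots,p\}$, i.e.\ it is a direct sum, so both the top-left and bottom-left rooms of $g$ lie in $S$; then by tracking what each top-left deletion does — it slides a single edge up or leftwards until it meets the (unchanged) bounding rectangle — one argues inductively that the rooms met by the left edge of $g$ are all in $S$ and that $S$ is precisely the block of rooms abutting that edge, which is a rectangle. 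Finally one reads this back through the reduction: since every corner deletion performed above merged away only rooms outside $S$ and touched neither the bounding rectangle nor the internal adjacencies among rooms of $S$, the rectangle exhibited in $g$ is an enveloping rectangle of $f$ tiled by exactly the rooms of $S$, proving the observation. I expect this last lemma to be the main obstacle: describing the exact shape of a prefix of a corner-deletion order and showing that the doubly-extremal case is forced to be rectangular; the deletion-stability facts, by contrast, should be routine once the machinery of \cite{Ackerman20061674} is in place.
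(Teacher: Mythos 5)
Your normalization strategy (peel off non-block rooms from two corners, then analyze the doubly-extremal configuration) is a genuinely different route from the paper's, which works directly in $f$: there, one assumes the union of the block's rooms is not an enveloping rectangle, observes it must then be disconnected or have a \textbf{T}-shaped corner, and shows in either case that some external room receives a label strictly between two block labels in one of the two deletion orders, contradicting block-ness. Measured against that, your proposal has two concrete gaps. First, the dichotomy you set up fails: the case $p=|g|$ can genuinely occur and your parenthetical dismissal of it is wrong. Take $\pi=231$ with the block at positions $\{1,2\}$ (values $\{2,3\}$): here $c=2$, so the top-left phase deletes the single room labeled $1$, the bottom-left phase deletes nothing ($a=1$), and $g$ consists of exactly the two rooms of $S$. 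There is no contradiction with $|S|<n$ --- $S$ simply expands to tile the whole bounding box of $g$ after the external room is removed --- and in this case your ``vertical split'' lemma has nothing to say, so the argument terminates without a conclusion about $f$.

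Second, and more fundamentally, the transfer back from $g$ to $f$ is invalid as stated. A corner deletion of a room outside $S$ stretches the rooms of $S$ adjacent to it, so the region occupied by $S$ in $g$ is not the region it occupies in $f$ (in the $\pi=231$ example $S$ tiles the full box in $g$ but only the bottom half in $f$). So ``the rectangle exhibited in $g$ is an enveloping rectangle of $f$'' is false; what you actually need is that deleting external rooms cannot convert a non-rectangular union of $S$-rooms into a rectangular one, i.e.\ that a \textbf{T}-shaped corner on the boundary of $S$'s union in $f$ forces a labeling contradiction. That is precisely the case analysis the paper carries out and that your ``geometric heart'' lemma also needs (to rule out misaligned right edges among the left-column rooms): your reduction relocates the hard geometric step rather than eliminating it, and you leave that step as a sketch. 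The deletion-stability facts you assume are indeed available from the machinery of Ackerman et al., so that part is a citable omission rather than a flaw; but without the $p=|g|$ case, a correct back-transfer, and a proof of the doubly-prefix lemma, the argument does not yet establish the observation.
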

  \begin{proof}
    Let $\pi$ be the Baxter permutation produced by algorithm FP2BP
    when run on the mosaic floorplan $f$. Suppose there is a
    \textbf{block} at consecutive positions $i,\dots,j$ in $\pi$. If
    the block is a trivial \textbf{block}, then the observation is
    correct as there will be either just one number in the block or
    all the numbers from $1\dots n$ and in both cases rectangles
    labeled by the numbers in the block are contained inside trivial
    enveloping rectangles.  The remaining case is that the block is a
    non-trivial block.  That is there is at least one number in $[n]$
    which is not contained in the block.  Since the basic blocks in a
    mosaic floorplan are rectangular in shape, if the rooms which are
    labeled by the numbers in the block do not form an enveloping
    rectangle it must be forming a shape with at least one \textbf{T}
    shaped corner or they form disconnected clusters. If the rectangles
    form disconnected clusters and if there is at at least one cluster
    with a \textbf{T} shaped corner then this reduces to the case that
    the shape formed by the basic rectangle has one \textbf{T} shaped
    corner. Hence all of them must be forming clusters which are
    rectangular in shape. Take any two such disconnected clusters and 
    take all the basic rectangles between them, it is obvious that 
    after labeling the top cluster the basic rectangles between two
    clusters will be labeled before reaching the second cluster since
    it is not connected to the first. Hence it contradicts our assumption
    that the basic rectangles in consideration where labeled by elements
    in a \textbf{block} of a permutation as they do not form a range together.
    Hence it remains to prove that if there is \textbf{T} shaped corner
    in the shape formed by the basic rectangles labeled by the numbers 
    in the block, it also leads to a contradiction.
    Since
    there are no empty rooms in a mosaic floorplan and the block is a
    non-trivial block there should be at least one basic rectangle
    adjacent to this \textbf{T} shaped corner which is labeled with a
    number not contained in the block. Let us consider case 1 in
    Figure~\ref{fig:TShapedCorners} where basic rectangles `a' and `b'
    are part of the block in the permutation $\pi$ whereas `c' is
    not. In this case it is clear that among these three the algorithm
    will label `a' first, `c' second and label `b' the last. Hence it
    contradicts our assumption that there exists a block in $\pi$
    containing labels of `a' and `b' but not `c' as the label
    corresponding to `c' will be a number between the labels of `a' and
    `b'. Hence this case is not possible. Let us consider case 2 in
    Figure~\ref{fig:TShapedCorners}, again `a' and `b' are part of the
    assumed block in $\pi$ whereas `c' is not.  Here the order in
    which the basic rectangles `a',`b',`c' will be deleted is: `b'
    first, `c' the second and `a' the last. Hence it contradicts our
    assumption that there is a block in $\pi$ containing `a' and `b'
    but not `c' as in $\pi$ label of `c' will appear in between labels
    of `a' and `b'. Similarly it can be proved that any such
    \textbf{T}-corner configuration will result in a contradiction to
    our assumption that there is a block in $\pi$, such that the rooms
    labeled by the numbers in that block is not contained inside an
    enveloping rectangle in the corresponding mosaic floorplan. Hence the
    observation.
    \begin{figure}
      \centering
      
      \caption{T-shaped corners}
      \label{fig:TShapedCorners}

      \includegraphics[scale=0.4]{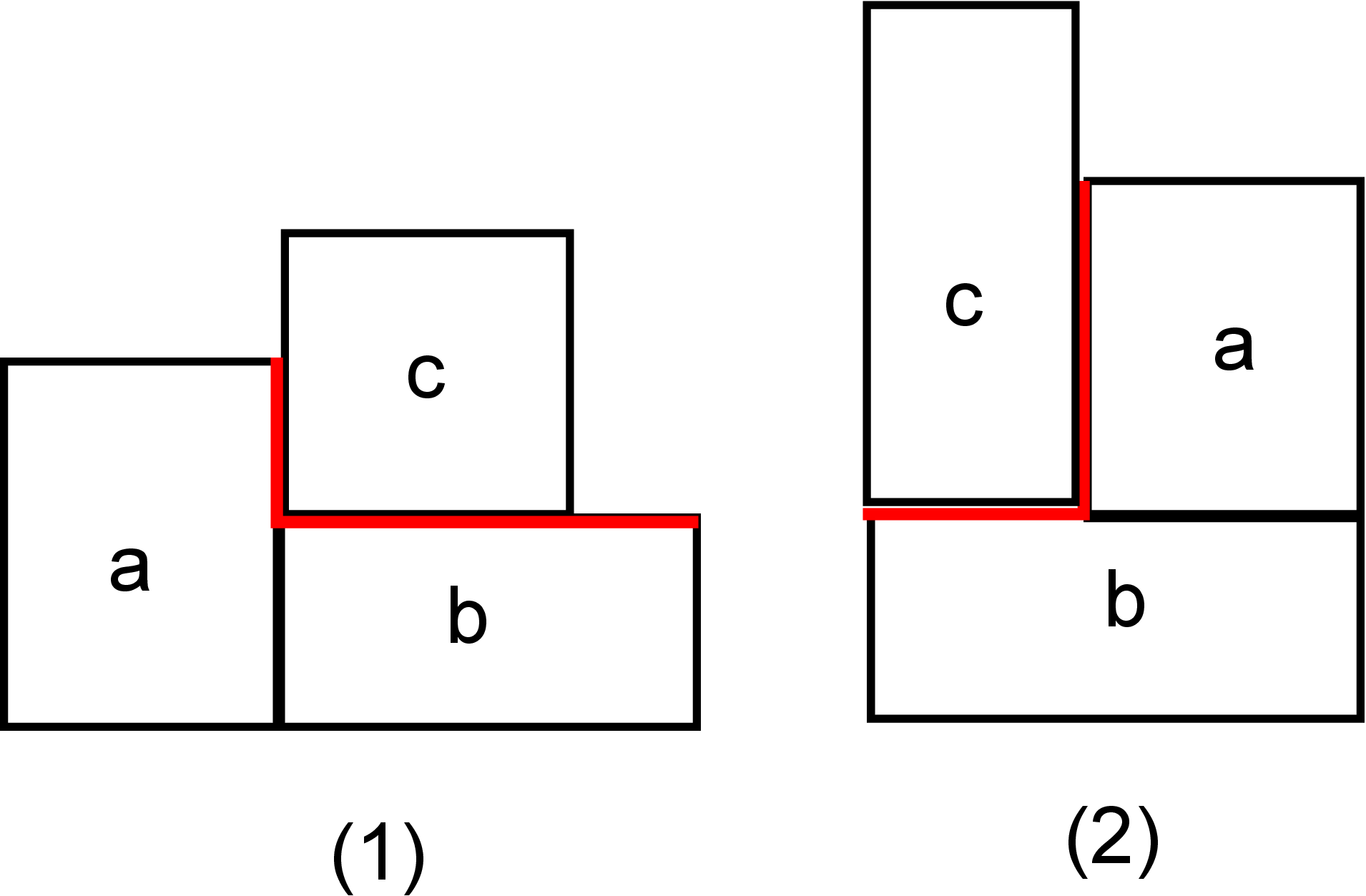}
    \end{figure}
  \end{proof}
  Now we will prove the characterization of Uniquely $\HFO_k$ 
  floorplans based on the permutations corresponding to them.

\begin{thm}
Uniquely $\HFO_k$ floorplans are in bijective correspondence with
permutations of length $k$ which are both Baxter and Simple.
\end{thm}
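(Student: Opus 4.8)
The plan is to build everything on Ackerman et al.'s bijection. Their algorithm FP2BP is already known to be a bijection from mosaic floorplans with $k$ rooms onto the Baxter permutations of length $k$, and every mosaic floorplan with $k$ rooms is trivially an $\HFO_k$ floorplan (embed it into a single rectangle); so the $\HFO_k$ floorplans with $k$ rooms are precisely the $k$-room mosaic floorplans. A bijection restricts to a bijection between any subset $A$ of its domain and the image of $A$, so it suffices to show that FP2BP sends the Uniquely $\HFO_k$ floorplans exactly onto the simple permutations (these are automatically Baxter, being outputs of FP2BP). Thus the whole theorem reduces to the single biconditional: \emph{a $k$-room mosaic floorplan $f$ is Uniquely $\HFO_k$ if and only if the permutation $\pi$ that FP2BP produces on $f$ is simple.}

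For the forward implication I argue the contrapositive. Suppose $\pi$ is not simple; then it has a non-trivial block, i.e.\ a segment of consecutive positions, of some size $s$ with $1<s<k$, whose values form a range of $s$ consecutive integers. Observation~\ref{Obs:BlockEnvolopingRectangle} supplies an enveloping rectangle $R$ in $f$ containing exactly the $s$ rooms labelled by the numbers of that block and no others. Turning $R$ into a single room (deleting all walls interior to $R$) yields a mosaic floorplan with $k-s+1<k$ rooms, hence an $\HFO_{k-s+1}$ floorplan, and re-embedding into that room the $s$-room mosaic sub-floorplan carried by $R$ (an $\HFO_s$ floorplan) recovers $f$; so $f\in\HFO_{\max\{k-s+1,\,s\}}$ with $\max\{k-s+1,\,s\}\le k-1$, and $f$ is not Uniquely $\HFO_k$.

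For the reverse implication, suppose $f$ is not Uniquely $\HFO_k$; then, as explained just before the statement, $f$ contains a proper, non-singleton set $S$ of basic rectangles that together form an enveloping rectangle $R$. Here I need the converse of Observation~\ref{Obs:BlockEnvolopingRectangle}: the FP2BP-labels of the rooms of $S$ form a non-trivial block of $\pi$. The idea is that the operation turning $R$ into a single room commutes with the top-left block-deletion used for labelling: while the current top-left block lies outside $R$ the process on $f$ and the process on the collapsed floorplan delete corresponding rooms, and once a block inside $R$ is deleted the process is forced to stay inside $R$ until $R$ is exhausted; hence the rooms of $S$ receive consecutive labels $a,a+1,\dots,a+s-1$ for some $a$, where $s=|S|$. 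The same argument for the bottom-left deletion used to extract the permutation shows the rooms of $S$ occupy consecutive positions $p,p+1,\dots,p+s-1$ of $\pi$. Therefore the segment $p,\dots,p+s-1$ is mapped exactly onto the range $\{a,\dots,a+s-1\}$, a non-trivial block since $1<s<k$, so $\pi$ is not simple. Together with the forward implication this gives the biconditional, hence the theorem.

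The step I expect to be the main obstacle is this commutation claim: that an enveloping rectangle ``behaves like a single room'' for both the top-left and the bottom-left deletion orders. Making it rigorous amounts to checking (i) that a block deletion applied to a room outside $R$ never reaches into the interior of $R$ and never reorders $R$ with respect to its neighbours in the deletion sequence --- intuitively because the edge-shift in a deletion only drags the $T$-junctions lying on the moving edge, and the bounding edges of $R$ shield its interior --- and (ii) that the not-yet-deleted rooms of $R$ always form a mosaic floorplan on a sub-rectangle of $R$ pinned to one corner of $R$, so that once the process enters $R$ it cannot leave before $R$ is gone. Granting (i) and (ii), the rest is bookkeeping on the already-established bijection and Observation~\ref{Obs:BlockEnvolopingRectangle}.
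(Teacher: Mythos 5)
Your forward implication is essentially the paper's entire proof: assume the Abe-label $\pi$ has a non-trivial block, invoke Observation~\ref{Obs:BlockEnvolopingRectangle} to get an enveloping rectangle, collapse it and re-embed to exhibit the floorplan as hierarchically constructible from strictly smaller pieces, contradicting Unique $\HFO_k$-ness. (Your count $\max\{k-s+1,s\}$ is in fact more accurate than the paper's $\max\{k-j,j\}$, since collapsing $j$ rooms to one leaves $k-j+1$ rooms; both bounds are $<k$ when $1<j<k$, so nothing breaks.) Where you genuinely diverge is in noticing that a \emph{bijective} correspondence also requires the reverse inclusion --- that every simple Baxter permutation of length $k$ is hit by a Uniquely $\HFO_k$ floorplan --- and the paper's proof simply does not argue this; it stops after showing the image lands inside the simple Baxter permutations, leaving surjectivity to the informal discussion preceding the theorem. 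Your reduction of that direction to a converse of Observation~\ref{Obs:BlockEnvolopingRectangle} (an enveloping rectangle forces its rooms to receive consecutive labels and occupy consecutive positions, hence a non-trivial block) is the right missing lemma, and it is consistent with what the paper later asserts without proof in the generating-tree section, namely that the leaves of any subtree form a range under both deletion orders. That said, your commutation claim --- that top-left and bottom-left block deletion treat an enveloping rectangle as an atomic room, so the deletion process, once it enters $R$, exhausts $R$ before leaving --- is only sketched, and it is the one step that requires real work on the geometry of the edge-shifting operation; as written it is a plausible outline rather than a proof. So: same mechanism as the paper for the containment it proves, plus an honest accounting of the surjectivity half the paper glosses over, at the cost of one unproven (but correct and necessary) geometric lemma.
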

\begin{proof}

  The bijection is the bijection described by \cite{Ackerman20061674}
  from mosaic floorplans to Baxter permutations, restricted to
  Uniquely $\HFO_k$ floorplans. Since Uniquely $\HFO_k$ permutations
  are a subclass of $\HFO_k$ permutations which are in-turn a subclass
  of mosaic floorplans we know that Uniquely $\HFO_k$ floorplans
  correspond to a sub-family of Baxter permutations. So it remains to
  prove that they are also a sub-family of simple permutations of
  length $k$. Suppose $\pi$ is the Abe-label of a Uniquely $\HFO_k$
  floorplan which is not a simple permutation, then there exists a
  non-trivial block in $\pi$ consisting of $j,1<j<k$ numbers. By
  observation \ref{Obs:BlockEnvolopingRectangle} there is an
  enveloping rectangle containing just the rooms which are labeled by
  the numbers in the non-trivial block. Now we can obtain the $\HFO_k$
  floorplan corresponding to $\pi$, by removing the rooms labeled 
  by numbers in the non-trivial block and then placing the mosaic
  floorplan constituted by the rooms labeled by the numbers in the
  non-trivial block of $\pi$. Thus the floorplan is $\HFO_{\max \lbrace k-j,j \rbrace }$
  contradicting our assumption that it is Uniquely $\HFO_k$. Hence
  the Abe-label corresponding to a Uniquely $\HFO_k$ permutation
  has to be a simple permutation of length $k$.
\end{proof}

\section{Generating trees of Order $k$}

A generating tree for a mosaic floorplan is a rooted tree which represents
how the basic rectangle was embedded with successive mosaic floorplans
to obtain the final floorplan. A generating tree is called a generating
tree of order $k$ if it satisfies the following properties:
\begin{itemize}
\item All internal nodes are of degree at most $ $$k$.
\item Each internal node is labeled by a Uniquely $\HFO_l$ permutation($l\leq k)$,
representing the mosaic floorplan which was embedded.
\item Out degree of a node whose label is a permutation of length $l$ is
$l$.
\item Each leaf node represents a basic room in the final floorplan and
is labeled by the Abe-label of that room in the floorplan.
\end{itemize}
The internal nodes are labeled by permutations corresponding to
Uniquely $\HFO_l$ floorplans because they are the only $\HFO_l$
floorplans which cannot be constructed hierarchically with $\HFO_j$
floorplans for $j<l$.  By this definition there is at least one
generating tree of order $k$ for any $\HFO_k$ floorplan. But the
problem is that due to the symmetry associated with vertical and
horizontal cut operations there could be multiple generating trees
representing the same floorplan.  To avoid this problem we define
skewed generating trees. An order $k$ generating tree is called a
\textbf{skewed generating tree} of order $k$ if it satisfies
additional to the above rules the following rule:
\begin{itemize}
\item The right child of a node cannot be labeled the same as parent if
the parent is labeled from $\left\{ 12,21\right\} $.\end{itemize}
\begin{thm}
$\HFO_k$ floorplans with $n$ rooms are in bijective correspondence
with skewed generating trees of order $k$ with $n$ leaves.
\end{thm}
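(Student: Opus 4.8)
The plan is to establish the bijection by exhibiting maps in both directions and showing they are mutually inverse. First I would define a map $\Phi$ from $\HFO_k$ floorplans with $n$ rooms to skewed generating trees of order $k$ with $n$ leaves. Given an $\HFO_k$ floorplan $f$, by definition it is obtained by starting from a single rectangle and recursively embedding mosaic floorplans of at most $k$ rooms. I would canonicalize this construction history: at the outermost level, ask whether the whole floorplan admits a proper subset of basic rectangles enclosed in an enveloping rectangle. If not, $f$ is Uniquely $\HFO_l$ for $l = n \le k$, and by the theorem in the "Uniquely $\HFO_k$" section its Abe-label is a simple Baxter permutation of length $n$; the tree is a root labeled by that permutation with $n$ leaves attached, each labeled by the Abe-label of the corresponding room (a trivial permutation of length $1$). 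If $f$ does decompose, I would take the \emph{coarsest} such decomposition — i.e., contract maximal enveloping rectangles so that the "quotient" floorplan on the enveloping rectangles plus leftover rooms is itself Uniquely $\HFO_l$ for some $l \le k$ — label the root by that quotient's Abe-label, and recurse into each enveloping rectangle. The key point making this well-defined is Observation~\ref{Obs:BlockEnvolopingRectangle} together with the fact (used in the Uniquely $\HFO_k$ proof) that the quotient construction corresponds exactly to block decomposition of the Abe-label; the coarsest geometric decomposition matches the decomposition of a Baxter permutation into simple "blocks," which is unique. To kill the $V/H$ symmetry I would impose the skew rule exactly as in the slicing-tree case: when a node is labeled $12$ or $21$ and its right child would carry the same label, those two levels collapse into a single coarser slice, so the canonical (coarsest) decomposition automatically produces a skewed tree. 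Hence $\Phi(f)$ is a well-defined skewed generating tree of order $k$ with $n$ leaves.

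Next I would define the reverse map $\Psi$ from skewed generating trees of order $k$ with $n$ leaves to $\HFO_k$ floorplans with $n$ rooms by bottom-up assembly: leaves are single rooms; at an internal node labeled by a Uniquely $\HFO_l$ permutation $\sigma$, take the mosaic floorplan $g_\sigma$ that $\sigma$ is the Abe-label of (this exists and is unique by the Ackerman et al.\ bijection), and inflate its $l$ rooms by the $l$ sub-floorplans already built for the children, in the order dictated by the out-edges. Since each inflation embeds a mosaic floorplan of at most $k$ rooms, the result is an $\HFO_k$ floorplan, and the number of leaves equals the number of basic rooms. The out-degree and labeling constraints in the definition of generating tree of order $k$ are exactly what make this assembly unambiguous.

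Then I would verify $\Psi \circ \Phi = \mathrm{id}$ and $\Phi \circ \Psi = \mathrm{id}$. For $\Psi \circ \Phi$: running $\Phi$ produces the coarsest enveloping-rectangle decomposition level by level, and reassembling via $\Psi$ glues those pieces back exactly because the Abe-labelling (FP2BP) is compatible with inflation — the Abe-label of an inflated floorplan is the wreath product of the quotient's Abe-label with the children's Abe-labels, which is precisely the block-decomposition identity recalled in the "Block Decomposition" subsection. So reassembly recovers $f$. For $\Phi \circ \Psi$: starting from a skewed tree $T$ and assembling $f = \Psi(T)$, I must check that the coarsest decomposition of $f$ recovers exactly the root of $T$ and its subtrees. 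This is where the skew condition earns its keep: without it, a node labeled $12$ over a right child labeled $12$ would be "refinable," and $\Phi$ would return a coarser tree than $T$; the skew rule forbids exactly that configuration, and the Uniquely $\HFO_l$ (i.e.\ simple) labels at every internal node guarantee no node is refinable by contracting a nontrivial sub-block. Thus $\Phi$ returns $T$.

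The main obstacle I anticipate is the well-definedness of $\Phi$ — specifically, proving that the "coarsest enveloping-rectangle decomposition" of a mosaic floorplan is \emph{unique} and corresponds level-for-level with the unique decomposition of a Baxter permutation into its substitution-indecomposable (simple) blocks. The cleanest route is to push everything through the Ackerman et al.\ bijection and Observation~\ref{Obs:BlockEnvolopingRectangle}: geometric enveloping rectangles $\leftrightarrow$ blocks of the Abe-label, so uniqueness of the coarsest floorplan decomposition reduces to the standard uniqueness of the block (substitution) decomposition tree of a permutation, with the simple-permutation nodes being forced to be Uniquely $\HFO_l$ by the preceding theorem. The remaining care is purely bookkeeping: matching the left-to-right child order in the tree with the positional order in the permutation, and confirming the skew rule is the exact tree-level shadow of "pick the coarsest slice first," in complete analogy with Wong--Liu's skewed slicing trees for the case $k=2$.
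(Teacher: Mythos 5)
Your proposal is correct and follows essentially the same route as the paper: both reduce uniqueness of the tree to the Albert--Atkinson unique substitution (block) decomposition of the Abe-label, using the correspondence between enveloping rectangles and permutation blocks (Observation~\ref{Obs:BlockEnvolopingRectangle}), with the skew rule on $12/21$ nodes absorbing exactly the residual ambiguity in that decomposition. You spell out the forward map, the inverse map, and their compositions more explicitly than the paper does, but the key lemma and the overall structure are identical.
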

Clearly the additional rule introduced above removes the symmetry
associated with vertical(permutation $21$) and horizontal(permutation
$12$) cut operations as we have seen in Slicing trees. Hence it remains
to prove that for any other embedding such a symmetry doesn't exist
thus making the skewed generating tree unique for an $\HFO_k$ floorplan.
Note that the generating tree provides a hierarchical decomposition
of the permutation corresponding to the floorplan into blocks as illustrated
by the figure \ref{Flo:GeneratingTree}. Albert and Atkinson \cite{Albert20051}proved
the following :

\begin{figure}
\caption{Generating Trees of Order $k$}
\label{Flo:GeneratingTree}
\begin{center}\includegraphics[scale=0.4]{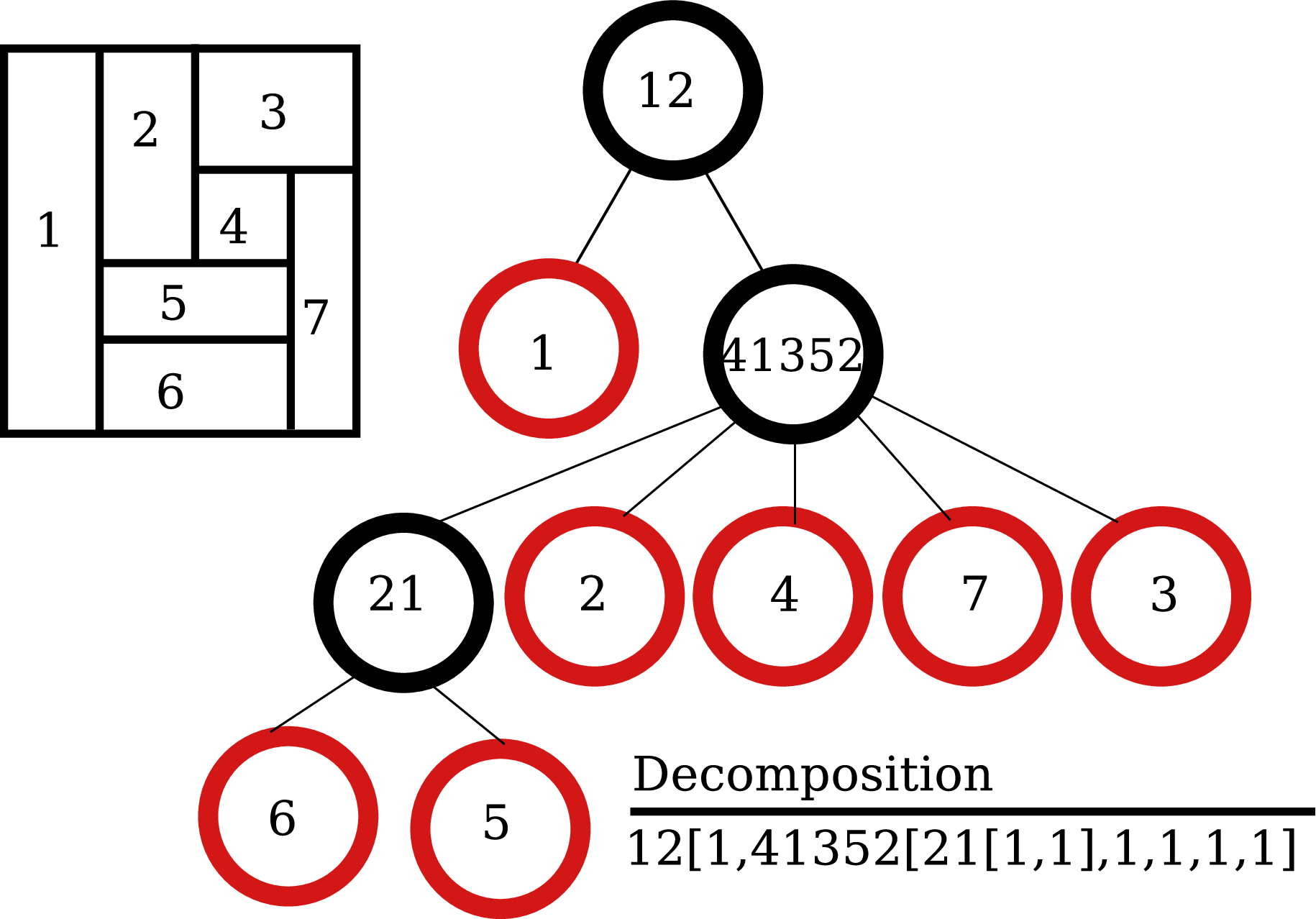}\end{center}
\end{figure}

\begin{thm}
[M.H Albert, M.D Atkinson]For every non-singleton permutation $\pi$
there exists a unique simple non-singleton permutation $\sigma$ and
permutations $\alpha_{1},\alpha_{2},\alpha_{3},\alpha_{4},\dots,\alpha_{n}$
such that \[
\pi=\sigma[\alpha_{1},\alpha_{2},\alpha_{3},\alpha_{4},\dots,\alpha_{n}]\]

Moreover if $\sigma\neq12,21$ then
$\alpha_{1},\alpha_{2},\alpha_{3},\alpha_{4},\dots,\alpha_{n}$ are
also uniquely determined. If $\sigma=12$(respectively $21$) then
$\alpha_{1}$ and $\alpha_{2}$ are also uniquely determined subject to
the additional condition that $\alpha_{1}$ cannot be written as
$(12)[\beta,\gamma]$(respectively as $(21)[\beta,\gamma]$.
\end{thm}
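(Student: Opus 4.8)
The plan is to prove this statement --- the \emph{substitution decomposition} theorem for permutations, i.e.\ the permutation analogue of modular decomposition --- from the combinatorics of \textbf{blocks}. The workhorse is a merging lemma: if $I$ and $J$ are blocks of $\pi$ with $I\cap J\neq\emptyset$ and neither contained in the other, then $I\cup J$, $I\cap J$, $I\setminus J$ and $J\setminus I$ are again blocks of $\pi$; the point is that two overlapping segments whose images are overlapping ranges have union, intersection and set differences that are again segments mapping to ranges. Granting this, call a block \textbf{strong} if it never properly overlaps another block, i.e.\ for every block $J$ one of $I\subseteq J$, $J\subseteq I$, $I\cap J=\emptyset$ holds. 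The merging lemma forces the strong blocks to form a laminar family (a tree under inclusion); its maximal members that are proper subsets of $\{1,\dots,n\}$ partition the positions, and I would write them left to right as $I_1,\dots,I_m$.

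For existence I would split into regimes. If $\pi=(12)[\alpha_1,\alpha_2]$ for some $\alpha_1,\alpha_2$ (i.e.\ $\pi$ is sum-decomposable), take $\alpha_1$ to be the first sum-indecomposable summand, i.e.\ the shortest prefix that is a block and is not itself of the form $(12)[\beta,\gamma]$, and set $\sigma=12$; if $\pi$ is skew-decomposable, do the mirror image with $\sigma=21$; otherwise set $\alpha_j$ to be the pattern of $\pi$ on $I_j$ and $\sigma$ to be the pattern of the block decomposition $(I_1)(I_2)\cdots(I_m)$, so that $\pi=\sigma[\alpha_1,\dots,\alpha_m]$ by construction. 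The content of the last case is that $\sigma$ is \emph{simple}: a nontrivial block of $\sigma$ supported on positions $p,\dots,q$ pulls back to the block $B=I_p\cup\dots\cup I_q$ of $\pi$, which strictly contains $I_p$ and is a proper subset of $\{1,\dots,n\}$; repeatedly merging $B$ with blocks it properly overlaps (via the merging lemma) either produces a proper strong block strictly larger than $I_p$ --- contradicting maximality of the $I_j$ --- or exhausts $\{1,\dots,n\}$, which one checks forces $\pi$ to be sum- or skew-decomposable, contrary to the case we are in. Since no permutation of length $3$ is simple, in this last regime $m\geq 4$ and $\sigma$ is a non-singleton simple permutation of length at least $4$; and when $\pi$ is already simple the construction simply returns $\sigma=\pi$ with all $\alpha_j$ singletons.

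For uniqueness, suppose $\pi=\tau[\beta_1,\dots,\beta_r]$ with $\tau$ non-singleton and simple. If $r\geq 3$, then simplicity of $\tau$ means no proper union of consecutive inflated factors is a block of $\pi$, so each inflated factor is a strong block and they are forced to be exactly $I_1,\dots,I_m$; hence $\tau=\sigma$ and $(\beta_1,\dots,\beta_r)=(\alpha_1,\dots,\alpha_m)$, which is the ``moreover'' clause. If $r=2$ then $\pi$ is sum- or skew-decomposable; these two possibilities are mutually exclusive by a short counting argument, and the existence of a length-$\geq 3$ simple quotient rules both of them out, so the three regimes are pairwise disjoint. Finally, in the case $\pi=(12)[\alpha_1,\alpha_2]$ the only ambiguity is reassociation of the ``summands'', and the stated side condition ``$\alpha_1$ cannot be written as $(12)[\beta,\gamma]$'' pins $\alpha_1$ down as the first sum-indecomposable summand and $\alpha_2$ as the remainder; the case $\sigma=21$ is symmetric, and one checks the condition is always attainable.

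The step I expect to be the main obstacle is exactly the simplicity of the quotient $\sigma$ in the third regime --- the assertion that maximal proper strong blocks can never be partially merged into a larger proper block. Making this rigorous needs the careful, iterated use of the merging lemma to build the strong hull of a hypothetical intermediate block and derive a contradiction, and it is here that the deeper structural input comes in: this is the point at which Albert and Atkinson \cite{Albert20051} invoke Schmerl and Trotter's classification of critically indecomposable partial orders \cite{Schmerl:1993:CIP:157146.157164}, which also yields the sharper facts used later (every such $\sigma$ has length at least $4$, and the explicit list of exceptionally simple permutations). A second point easy to overlook is that uniqueness of the $\alpha_i$ genuinely fails when $\sigma\in\{12,21\}$; the indecomposability condition on $\alpha_1$ in the statement is precisely the repair, so the argument must verify both that it can always be imposed and that it restores uniqueness.
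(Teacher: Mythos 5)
This theorem is quoted from Albert and Atkinson; the paper itself gives no proof of it (the surrounding text only applies it to skewed generating trees), so there is no in-paper argument to compare yours against. Judged on its own, your outline is essentially the standard elementary proof of the substitution (simple-quotient) decomposition, and its architecture is sound: the merging lemma for properly overlapping blocks is correct as stated, the trichotomy sum-decomposable / skew-decomposable / simple quotient of length $\geq 4$ is the right case split, the ``exhausts $\{1,\dots,n\}$ forces sum- or skew-decomposability'' step is exactly the right way to get simplicity of the quotient, and your handling of the $\sigma\in\{12,21\}$ ambiguity via the first sum-(resp.\ skew-)indecomposable summand is precisely how the side condition in the statement restores uniqueness.

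Two points. First, the one genuine gap is in the uniqueness step for $\sigma\notin\{12,21\}$: from ``no proper nontrivial union of consecutive inflated factors is a block'' you conclude ``so each inflated factor is a strong block.'' That is a non sequitur --- strongness requires non-overlap with \emph{arbitrary} blocks of $\pi$, not merely with unions of whole factors, and a block $K$ straddling the boundary of some $J_i$ is not excluded by what you proved. The repair is easy and avoids strongness altogether: each inflated factor $J_i$ is a proper block, hence (in the non-sum/skew-decomposable regime, where maximal proper blocks are pairwise disjoint by your own merging argument) lies inside a unique maximal proper block $I_j$; each $I_j$ is then a disjoint union of consecutive $J_i$'s whose index set must be a block of $\tau$, and simplicity of $\tau$ forces that index set to be a singleton (it cannot be all of $\{1,\dots,r\}$, since $I_j$ is proper); so the $J_i$ coincide with the $I_j$ and the decomposition is forced. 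Second, you locate the ``deeper structural input'' (Schmerl--Trotter on critically indecomposable posets) at the simplicity-of-the-quotient step. That misplaces where the depth lies: the present theorem is entirely elementary --- exactly the merging-lemma argument you sketch --- and Schmerl--Trotter is invoked only for the later result on one- and two-point deletions of simple permutations and the classification of exceptionally simple permutations, as the paper's own ``Previous Work'' section indicates. So your proof needs less machinery than you feared, but the uniqueness paragraph does need the patch above.
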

The proof is completed by noting that the decomposition obtained by
skewed generating tree of order $k$ satisfies the properties of their
decomposition, that is if $\sigma=12/21$ its right child cannot be
$12/21$ hence the block corresponding to the right child, $\alpha_{1}$
cannot be $(12)[\beta,\gamma]/(21)[\beta,\gamma]$. Since such a decomposition
is unique the skewed generating tree also must be unique. Hence the
theorem. This bijection between $\HFO_k$ floorplans is very crucial 
for the characterization of $\HFO_k$ floorplans in terms of permutations
corresponding to it and also for getting a coded representation of 
$\HFO_k$ floorplans for stochastic search methods. 

\section{Characterization of $\HFO_k$}
\begin{thm}
\label{thm:HFO-k-bijection}$\HFO_k$ floorplans with $n$ rooms is in bijective
correspondence with Baxter permutations of length $n$  which avoids patterns
from Simple permutations of length $k+1$ and Exceptionally simple permutations
of length $k+2$.
\end{thm}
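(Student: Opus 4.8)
The plan is to show that the bijection $\Phi$ of Ackerman et al.\cite{Ackerman20061674} between mosaic floorplans with $n$ rooms and Baxter permutations of length $n$, when restricted to the $\HFO_k$ floorplans, has image exactly the stated set; since $\Phi$ is already a bijection, only the image has to be identified. The bridge is the skewed generating tree bijection together with the characterization of Uniquely $\HFO_k$ floorplans proved above: a mosaic floorplan $f$ lies in $\HFO_k$ iff it admits a skewed generating tree of order $k$, and the internal nodes of such a tree are labeled by Uniquely $\HFO_l$ permutations with $l\le k$, i.e.\ by permutations of length at most $k$ that are simultaneously Baxter and simple. Reading this tree through $\Phi$ and Observation~\ref{Obs:BlockEnvolopingRectangle}, the hierarchy of enveloping rectangles of $f$ becomes precisely a hierarchy of blocks of $\pi=\Phi(f)$, so the generating tree of $f$ is realised as the Albert--Atkinson substitution decomposition tree of $\pi$ (its uniqueness, including the skew condition on $12$/$21$ nodes, being exactly the Albert--Atkinson uniqueness statement quoted above). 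Hence $f\in\HFO_k$ iff $\pi$ is Baxter and every simple permutation occurring as a node label of its substitution decomposition has length at most $k$.

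Next I would turn that last condition into a pattern-avoidance statement by establishing two facts. First, using the standard behaviour of simple permutations under inflation --- a simple permutation of length $\ge 2$ embedded in $\sigma[\alpha_1,\dots,\alpha_m]$ either lies inside a single $\alpha_i$ or is a pattern of $\sigma$, since otherwise the points it takes from one block would form a non-trivial block of it --- an induction on the decomposition tree shows that the simple permutations of length $\ge 2$ contained in $\pi$ are exactly those contained in its node labels, so in particular the longest simple pattern of $\pi$ has length $\max_\nu|\sigma_\nu|$. Thus ``$\pi$ decomposes with all simple nodes of length $\le k$'' is equivalent to ``$\pi$ contains no simple permutation of length $>k$''. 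Second, the ``Baxter'' requirement on the nodes is automatic: if $\pi$ is Baxter then so is every block of $\pi$ (a value-consecutive $2413$ or $3142$ inside a block is one in $\pi$), and if $\pi=\sigma[\alpha_1,\dots,\alpha_m]$ with $\sigma$ simple then $\sigma$ is Baxter too, because a value-consecutive occurrence of $2413$ or $3142$ in $\sigma$ lifts --- taking the topmost point of the first block and the bottommost point of the last block --- to one in $\pi$; by induction every node of the tree is a Baxter simple permutation, hence a genuine Uniquely $\HFO_l$ permutation. So $f\in\HFO_k$ iff $\pi$ is Baxter and contains no simple permutation of length $>k$.

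Finally I would convert ``contains no simple permutation of length $>k$'' into the explicit forbidden list by a minimality argument on the Albert--Atkinson simple-extension theorem: every simple permutation of length $m\ge 2$ contains a simple permutation of length $m-1$ unless it is exceptionally simple, in which case it contains one of length $m-2$ but none of length $m-1$. If $\pi$ contained a simple permutation of length $>k$, pick one, $\tau\le\pi$, of minimum length $m$. If $m=k+1$ it is a forbidden simple permutation of length $k+1$; if $m\ge k+2$ and $\tau$ is not exceptionally simple then $\tau$ contains a simple permutation of length $m-1>k$, contradicting minimality, so $\tau$ is exceptionally simple; and if moreover $m\ge k+3$ then $\tau$ contains a simple permutation of length $m-2>k$, again contradicting minimality --- leaving $m=k+2$ with $\tau$ exceptionally simple. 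Hence $\pi$ contains a simple permutation of length $>k$ iff it contains a simple permutation of length $k+1$ or an exceptionally simple permutation of length $k+2$, whose contrapositive is exactly the avoidance condition in the statement; combining with the previous paragraph gives the theorem. The step I expect to be the main obstacle is meshing the three genuinely different structures involved --- the geometric hierarchical decomposition of floorplans, the combinatorial substitution decomposition, and the barred-pattern definition of Baxter permutations --- cleanly enough that the two bijections are compatible; in particular, verifying that Baxter-ness descends to every node of the decomposition tree while respecting the skew conditions. Once that is in place, the enumerative reduction via the simple-extension theorem is routine.
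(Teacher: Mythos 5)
Your proposal is correct and follows essentially the same route as the paper: the Ackerman et al.\ bijection restricted to $\HFO_k$, the identification of the skewed generating tree with the Albert--Atkinson substitution decomposition via Observation~\ref{Obs:BlockEnvolopingRectangle} (so that a simple pattern of length $j>k$ would force a node of arity $j$), and the one-/two-point-deletion theorem for simple versus exceptionally simple permutations to pin down the forbidden lengths $k+1$ and $k+2$. Your closing minimality argument on a shortest simple pattern of length $>k$ is a cleaner rendering of the paper's ``apply the observation recursively'' step, and your explicit check that Baxter-ness descends to every node of the decomposition fills in a detail the paper leaves implicit, but neither changes the substance of the argument.
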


To prove this we need the following theorem by 
\begin{proof}The bijection is the bijection defined by \cite{Ackerman20061674}
 from mosaic floorplans to Baxter permutations
restricted to $\HFO_k$ floorplans. Since $\HFO_k$ is sub-class of
mosaic floorplans the bijection will map them to a sub-class of Baxter
permutations. It is easy to prove that if a permutation corresponds
to an $\HFO_k$ floorplan then it cannot contain text which matches
patterns from Simple permutations of length $k+1$ and exceptionally
simple permutation of length $k+2$. Suppose in the permutation $\pi$
corresponding to an $\HFO_k$ floorplan 
there is text at comprising of points $(i_{1},i_{2},i_{3},i_{4},\ldots,i_{j})$
which matches a simple permutation $\sigma$ of length $j,j>k$.
Then in the generating-tree of order $k$ of the $\HFO_k$ floorplan
corresponding to the given permutation $\pi$, it is clear that no
proper subset of $\left\{ \pi[i_{m}]|1\leq m\leq j\right\} $ could
be inside a single sub-tree because in the generating tree the elements
of the sub-tree will always form a range(root of node of 
the subtree corresponds to the enveloping rectangle of all leaf nodes
in the subtree) and no proper subset of a
simple permutation can form a range. Consider the smallest(in the number 
of vertices) subtree which
contains all of $\left\{ \pi[i_{m}]|1\leq m\leq j\right\} $. In this subtree
let the root node be $r$ and let its children be $\lbrace r_1,r_2,r_3,\dots,r_l \rbrace$
. None of the subtrees rooted at $r_i,1\leq i \leq l$ can contain all of
$\left\{ \pi[i_{m}]|1\leq m\leq j\right\} $ because then $r_i$ will be the 
smallest subtree containing all of $\left\{ \pi[i_{m}]|1\leq m\leq j\right\} $.
And for the above mentioned reason no $r_i, 1\leq i \leq l$ can contain
a proper subset of $\left\{ \pi[i_{m}]|1\leq m\leq j\right\} $. Hence 
there should be $j$ children of $r$, each containing exactly one node from
$\left\{ \pi[i_{m}]|1\leq m\leq j\right\} $.
Hence there are at least $j$ children for the root.
Since $j>k$ this leads to a contradiction to our assumption
that the permutation corresponds to an $\HFO_k$ floorplan because there can no internal
node of degree strictly greater than $k$ in a generating tree of order $k$.
. 
So it remains to prove that any $\HFO_l$, $l>k$ floorplan which is
not $\HFO_k$ will contain a text matching the patterns from either
Simple permutations of length $k+1$ or $k+2$. Let the floorplan
be $\HFO_l$ for $l>k$ and which is not $\HFO_k$, and $l$ be the smallest
such integer that the floorplan is $\HFO_l$. That is in the floorplan
tree for this floorplan there is an internal node with out-degree $l$.
This node will correspond to a Uniquely $\HFO_l$ permutation 
and the ranges formed by subtrees rooted at the children of this
node will be form the pattern which is the Uniquely $\HFO_l$ permutation
corresponding to the root node. Hence to obtain the \textbf{text} matching
the \textbf{pattern} in the
permutation corresponding to the floorplan we can pick
one arbitrary leaf node from each subtree and then choose the Abe-label 
of that node.
Hence every simple permutation of length $l$ contains a pattern from
either simple permutations of length $l-1$ when the original permutation
is not exceptionally simple or simple permutations of length $l-2$
when the original permutation is exceptionally simple as deletion
of an element from a permutation preserves the relative ordering among
the other elements of the permutation. So we can find in an $\HFO_l$,
$l>k$ permutation a pattern which is a simple permutation of length
$k+1$ or $k+2$ by applying the above observation recursively. \end{proof}

\section{Algorithm for Recognition}

The algorithm is based on the bijection we obtained above. If a given
permutation is Baxter then it is $\HFO_{j}$ for some $j$. Suppose
it is $\HFO_{k}$ then we know that there exits an order $k$ generating
tree corresponding to the permutation. And in a generating tree of
order $k$ the label of leaves of any sub-tree will always form a
range as the root of the sub-tree is an enveloping rectangle which
contains all the rooms corresponding to the leaves. The algorithm~\ref{alg:algor-recogn}
tries to iteratively reduce the sub-trees of the generating tree to
nodes, level by level.

We will prove the correctness of the algorithm by use of the following loop 
invariant.

\textbf{Loop Invariant:}
  At the end of each iteration of the \textbf{for loop} of lines 2-13,
  all sub-trees of the generating tree containing leaf nodes which are
  labeled only from $\left\{ \pi[j]|1\leq j\leq i\right\} $ are replaced
  with a single node(correspondingly pushed onto the stack as a range
  of numbers which are labels of the leaf nodes of the sub-tree).

\textbf{Initialization:}
When $i=1$, $\left\{ \pi[j]|1\leq j\leq i\right\} $ is equal to $\pi[1]$.
Since
all internal nodes are of out-degree $2$ or more the only sub-tree
containing only $\pi[1]$ is the leaf node itself so there is nothing to
be reduced hence the condition is trivially met.

\textbf{Maintenance:}
We will assume that all the sub-trees whose leaves
are labeled from $\left\{ \pi[j]|1\leq j\leq i\right\} $ is reduced
to a node before iteration $i+1$ and then prove that at iteration $i+1$ the condition is maintained
 by the
\textbf{for loop} for all sub-trees whose leaves are labeled from $\left\{ \pi[j]|1\leq j\leq i+1\right\} $.
Suppose if there is a sub-tree whose leaves are labeled only from $\left\{ \pi[j]|1\leq j\leq i+1\right\} $
and does not contain $\pi[i+1]$ then by the induction hypothesis
it has been reduced to a node. Suppose there exists sub-trees which
also contains $\pi[i+1]$ as a leaf node then choose the sub-tree
which has $\pi[i+1]$ as an immediate child node. In this sub-tree
all its other children are reduced to nodes by induction hypothesis,
so at iteration $i+1$ there must exist $j\leq k$ elements at the top of the
stack corresponding to the children of this sub-tree as it has
$\pi[i+1]$ as the right most leaf node which also is the current stack
top. Now the algorithm will reduce
those $j$ elements to a range and then try to reduce the tree further
by scanning the top $k$ elements of the stack.

\textbf{Termination:}
When $i=n$ the tree itself is a sub-tree containing leaf nodes
labeled from $[n]$ hence it must be reduced to a single node. Hence
if at the end of the algorithm the stack contains just one element
that would mean that the given permutation is $\HFO_{k}$. Suppose
the algorithm is able to reduce it to a single element on the stack
then by retracing the stack operations carried out by the algorithm 
we can build an order $k$ generating tree as at any point of time 
we merged at most $k$ elements together which together formed a range and
was a Baxter permutation(thus correspond to a mosaic floorplan).
Hence upon acceptance by the algorithm for a given permutation
it is clear that there is an order $k$ generating tree corresponding
to the given permutation. If the permutation is not
$\HFO_{k}$ algorithm would not be able to find a generating tree
of order $k$. Hence it would reject such a permutation.

 Figure \ref{fig:RecognitionAlgoExample}
illustrates the working of the algorithm on an $\HFO_5$ permutation.
The figure shows the generating tree of order $5$ corresponding
to the floorplan, and trace of the stack used by the algorithm
(to be read from left).
The permutation is scanned from left to right and each time an
insertion takes place in the stack, the top $5$ elements are searched
to see if they form a range. In the example shown in the figure until
$3$ is inserted onto the stack this doesn't happen. At the instant $3$
is inserted it is combined with the other four elements to a range 
corresponding to the internal node labeled $41352$ in the generating
tree. Then this is combined with $1$ to form another range and finally
it is reduced to a single node by combining with $7$. This final node
corresponds to the root node of the generating tree.
\begin{algorithm}
 \SetKwData{Stck}{S}
 \SetKwData{Rng}{R}
 \SetKwData{Top}{top}
 \SetKwData{I}{i}
 \SetKwData{J}{j}
 \SetKwData{K}{k}
 \SetKwData{Lvar}{l}
 \SetKwData{N}{n}

 \SetKw{KwDownTo}{downto}
 \SetKw{KwAccept}{Accept}
 \SetKw{KwReject}{Reject}

 \SetKwFunction{Push}{push}
 \SetKwFunction{Pop}{pop}
 \SetKwFunction{Range}{Range}
 \SetKwFunction{Size}{size}

 \SetKwInOut{Input}{Input}

 \Input{A permutation $\pi$ of length $n$}
 \BlankLine
 Stack \Stck$\leftarrow \phi$\;

 \For{\I$=1$ \KwTo \N} {
    \Stck.\Push{\Range{$\pi[i]$}}\;
    \While{There exists a \J, \J$\leq$\K such that \J is the least such number
for  which top \J elements of \Stck form a range} {
     \If{\Stck$[\Top \dots(\Top-\J)]$ is a Baxter permutation}{
          \Rng = \Range{\Stck$[$\Top$\dots($\Top-\J$)]$}\;
          \For{\Lvar$=$\J \KwDownTo $1$ } {
             \Stck.\Pop{}\;
          }
          \Stck.\Push{\Rng}\;
     }
    }
 }
 \eIf{\Stck.\Size{}$=1$} {
 \KwAccept\;
 } {
  \KwReject\;
 }

 \caption{Algorithm for checking if a permutation is $\HFO_k$}
 \label{alg:algor-recogn}
\end{algorithm}

\begin{figure}
  \centering
  
  \caption{Example : $\HFO_5$ recognition algorithm}
  \label{fig:RecognitionAlgoExample}
  \includegraphics[scale=0.4]{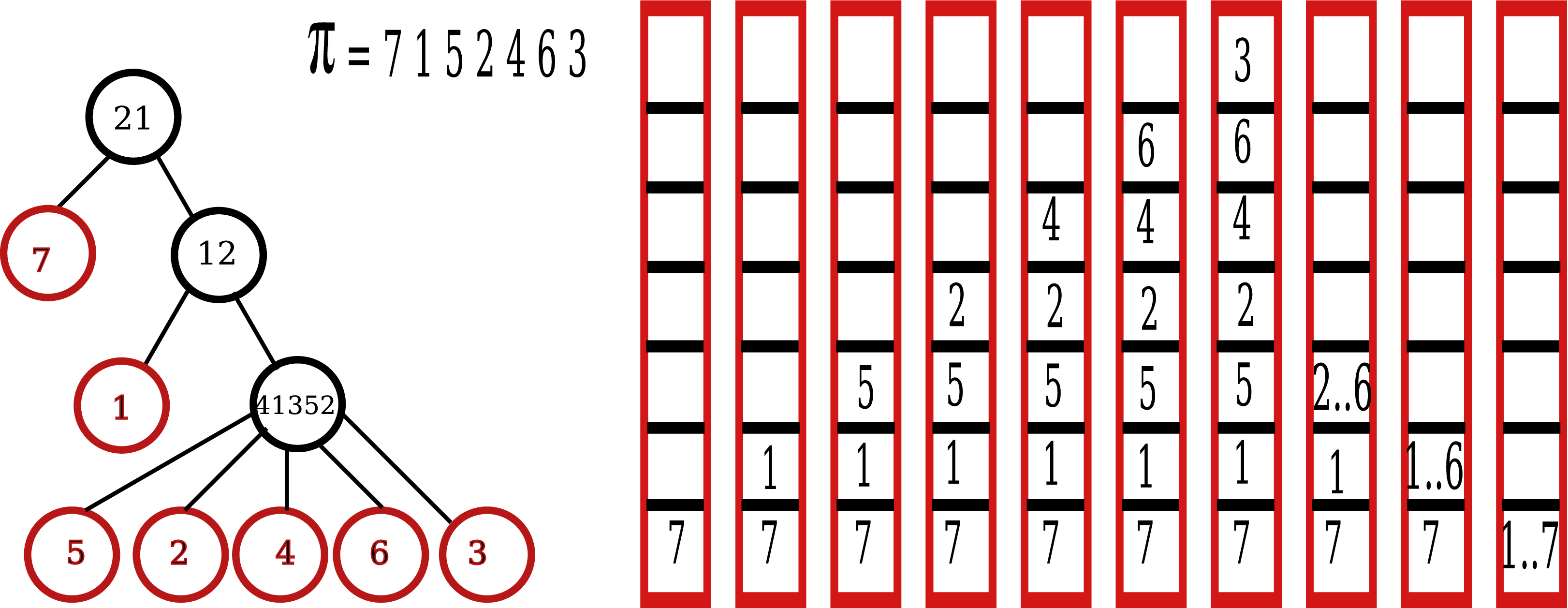}
\end{figure}

\section{Analysis of the recognition algorithm}

The algorithm runs in both linear time and space for a fixed predetermined
value of $k$ which does not change with the input length. Linear space is easy to
observe as the stack at any point of the execution of the algorithm contains
no more than $n$ elements. To prove that the algorithm runs in linear time
we assign an amortized cost of $k^2$-units to each node(including leaf nodes)
in the generating tree.
We also use the observation that in a tree of $n$ nodes there can not
be more than $n-1$ internal nodes. Hence the total nodes of the generating tree is
bounded by $2n-1$. So if the algorithm spends at most $k^2$ units of time with
each node then the total time taken by the algorithm is $O(n)$.

Now we will 
prove that the algorithm spends at most $k^2$ units of time with each node 
in the skewed generating tree of order $k$ corresponding to the permutation, if the
permutation is $\HFO_k$. 
The key operation in the algorithm is scanning the top $k$ elements of the stack
to find a set of elements which form a range. It is easy to observe that the stack
is scanned only when a new element is inserted onto the top of the stack. The newly
inserted element can either be a number in the permutation(which corresponds to a leaf node
in the order $k$ generating tree) or a range of elements(which corresponds to an internal node
in the order $k$ generating tree). Also, observe that a node is inserted only once into the stack.
And when a range corresponding to a node is inserted to the stack,
it is either merged with the top $j, j<k$ elements of the stack to become another
node or the top $k$ elements of the stack are searched unsuccessfully and the node remains on the 
top of the stack. In both cases, at most $k^2$ units of time is spend. 
Because to check whether top $i$ elements form a range, $i$ units of time is 
needed, so doing this for all $1\leq i \leq k$ we need $\frac{k(k+1)}{2}$ time
which is clearly upper bounded by $k^2$.
 Thus distributing the costs this way, we get
that for each node in the tree at most $k^2$-units of time is spend. Since there are
only $O(n)$ nodes in the tree the total time spend by the algorithm is 
$O(n)$.

If the permutation is not Baxter then at some point during the  execution of the algorithm
it will find a set of ranges on stack top which does not form a Baxter permutation,
or the algorithm would not be able to merge the elements of the permutation to a single
node. Even in this case the number of nodes in the partial tree which the 
algorithm can find with the given permutation is bounded by $2n$, and with
each node at most $k^2$ units of computation will be spend. Hence in this
case also the algorithm runs in linear time. If the permutation is Baxter 
and is $\HFO_j$ for some $j>k$ and is not $\HFO_k$ then again the same cost
analysis is  valid  and hence the algorithm runs in linear time for all possible
types of input permutations.

Note that checking if a set of $k$ elements form a range can be checked
in constant time for a fixed value of $k$ by writing conditional statements to check
if the elements follow any of the $k!$ arrangements. We can also check if a 
set of $k$ elements form a Baxter permutation for a fixed $k$ in constant time 
by writing conditional statements to check if their rank ordering is equivalent
to any one of the Baxter permutations of length $k$(whose number is bounded by number
of permutations, $k!$). Hence the above algorithm runs in linear time for a predetermined
value of $k$.

If the value $k$ is unknown the same algorithm can be made to run in $O(n^2 \log_2 n)$
time to find out the minimum $k$ for which the input permutation is $\HFO_k$ with
 some simple modifications in the implementation.
 The first modification we have to implement is to make the algorithm
 checks if the input permutation $\pi$ is Baxter permutation. If it is not 
it cannot be $\HFO_j$ for any $j$ hence is rejected. If it is a Baxter permutation
then we know that it is $\HFO_k$ for some $ k \leq n $. And also at each time a 
new element is inserted onto the stack we have to check if that forms a range with any 
of the top $j, j \leq |S|$ elements of the stack where $|S|$ denotes the current 
size of the stack. Implementing these changes alone we obtain  the modified 
Algorithm \ref{Flo:AlgoHFOkFindingK}. The
increase in running time comes from the fact that we don't know the value of the
$k$, thus forcing us to scan the entire stack at the insertion of a new element 
on top of the stack costing us $cn \log_2 n$ time to sort the elements of the 
stack and see if there exists a $j, j \leq |S|$  such that the current element
forms a range along with $S[top,\dots,(top-j)]$. Checking if a permutation is
Baxter takes $O(n^2)$ time. And we use the same amortized cost analysis as above
but with each node(internal or leaf) in the tree we associate the cost of $cn \log_2 n$
which is spent at the time it is first inserted on to the stack for sorting
the current elements of the stack. The number of nodes in the tree is again bounded by $2n$. Hence the
stack reduction part of the algorithm runs in $O(n^2 \log_2 n)$ time and checking if a
permutation is Baxter part runs in time $O(n^2)$. So the
total time taken is $O(n^2 \log_2 n)$. 

\begin{algorithm}
 \SetKwData{Stck}{S}
 \SetKwData{Rng}{R}
 \SetKwData{Top}{top}
 \SetKwData{I}{i}
 \SetKwData{J}{j}
 \SetKwData{K}{k}
 \SetKwData{Mvar}{m}
 \SetKwData{Lvar}{l}
 \SetKwData{N}{n}
 \SetKwData{Array}{Array}
 \SetKwData{Mnodes}{mergedNodes}

 \SetKw{KwAccept}{Accept}
 \SetKw{KwReject}{Reject}
 \SetKw{KwTrue}{true}
 \SetKw{KwFalse}{false}
 \SetKw{KwBreak}{Break}
 \SetKw{KwDownTo}{downto}

 \SetKwFunction{Push}{push}
 \SetKwFunction{Pop}{pop}
 \SetKwFunction{Range}{Range}
 \SetKwFunction{Size}{size}
 \SetKwFunction{Sort}{sort}
 \SetKwFunction{Start}{start}
 \SetKwFunction{End}{end}

 \SetKwInOut{Input}{Input}

 \SetKwComment{tcc}{/*}{*/}

 \Input{A permutation $\pi$ of length $n$}
 \BlankLine 
 Stack \Stck$\leftarrow \phi$\;

 \If{$\pi$ is \textbf{not} a Baxter permutation}{
   \KwReject\;
 }

 \For{\I$=1$ \KwTo \N} {
   \Stck.\Push{\Range$(\pi[i])$}\;
   \Mnodes$\leftarrow$\KwTrue\;
   
    \Repeat{\Mnodes$=$\KwFalse}{
      \Array$\leftarrow$ \Sort{\Stck$[$\Top$\dots,1)]$} \;
     \tcc{Find the longest range containing stack top}
      \For{\I$=1$ \KwTo \N}{
        \If{\Array$[i]=$\Stck$[$\Top$]$}{
          \KwBreak\;
        }
      }
      \For{\J$=$\I \KwTo \N} {
        \tcc{This loop will run until \Array$[$\I$,\dots,$\J$]$ cease to become a range of
                       contiguous elements}
        \If{\Array$[\J]$.\End{}$+1\neq$\Array$[\J+1]$.\Start{}}{
           \KwBreak\;
        }
      }
      \For{\Lvar$=$\I \KwDownTo $2$} {
        \tcc{This loop will run until \Array$[$\Lvar$,\dots,$\I$]$ cease to become a range of
                       contiguous elements}
        \If{\Array$[\Lvar]$.\Start{}$-1\neq$\Array$[\Lvar-1]$.\End{}}{
           \KwBreak\;
        }
      }
      \tcc{\Array$[\Lvar,\dots,\J]$ forms a range containing stack top, hence \Stck$[\Top,\dots,(\Top-(\J - \Lvar))]$ 
           forms a range}
     \eIf{\J$\neq$\Lvar}{
        \Rng$\leftarrow$\Range{\Stck$[\Top,\dots,\Top-(\J-\Lvar)]$} \;
        \For{\Mvar$=1$ \KwTo $(\J-\Lvar)$}{
          \Stck.\Pop{}\;
        }
        \Stck.\Push{\Rng}\;
     }{
       \Mnodes$\leftarrow$\KwFalse\;
     }

    }
  }
  \eIf{\Stck.\Size{}$=1$}{
    \KwAccept \;
  }{
    \KwReject \;
  }

 \caption{Algorithm for finding the minimum $k$ for which $\pi$ is $\HFO_k$}
 \label{Flo:AlgoHFOkFindingK}
\end{algorithm}

\section{Counting}

Given an $n$, it is interesting to know the number of distinct $\HFO_k$
floorplans with $n$ rooms. We call two $\HFO_k$ floorplans distinct
in the same way \cite{Sakanushi1193019} defines it.
Given a floorplan $f$, a segment $s$ supports a room $r$ in $f$
if $s$ contains one of the edges of $r$. We say that $s$ and $r$
hold a \textit{top-,left-,right-, or bottom-seg-room} relation if
$s$ supports $r$ from the respective direction. Two floorplans are
equivalent if there is a labeling of their rectangles and segments
such that they hold the same seg-room relations, otherwise they are
distinct. This is the same definition of equivalent floorplans 
\cite{Ackerman20061674} used. Since we are considering a restriction of the
bijection they gave between mosaic floorplans and Baxter permutations
to $\HFO_k$ floorplans, we can say that two $\HFO_k$ floorplans
are distinct if they are mapped to different permutations by this
bijection. And by theorem-\ref{thm:HFO-k-bijection} there is a bijection
between $\HFO_k$ floorplans and Baxter permutations which avoid patterns
from simple permutations of length $k+1$ and exceptionally simple
permutations of length $k+2$. Hence we give a lower bound 
on number of distinct $\HFO_k$ floorplans on $n$ rooms
by giving a lower bound(resp., an upper bound) on the number of $\HFO_k$
permutations.
\begin{thm}
There are at least $3^{n-k}$ $\HFO_k$ permutations of length $n$
which are not $\HFO_j$ for $j<k$. 
\end{thm}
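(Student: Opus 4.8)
The plan is to seed with a single floorplan that genuinely needs order $k$ and then grow it only by slices. Fix a Uniquely $\HFO_k$ floorplan $F_\sigma$ with exactly $k$ rooms (if no such floorplan exists then $\HFO_k=\HFO_{k-1}$ and the statement is vacuous; otherwise it exists, e.g. a wheel for $k=5$ and, for $k\ge 7$, from the floorplan of the first theorem via the generating-tree bijection). Let $\sigma$ be its Abe-label, which by the characterization of Uniquely $\HFO_k$ floorplans is a permutation of length $k$ that is simultaneously Baxter and simple. The length-$2$ simple permutations are $12,21$, there is no simple permutation of length $3$, and the length-$4$ simple permutations $2413,3142$ are not Baxter, so either $k=2$ or $k\ge 5$; the case $k=2$ is immediate (every slicing floorplan with $n\ge 2$ rooms lies in $\HFO_2\setminus\HFO_1$, and there are $S_{n-1}\ge 3^{\,n-2}$ of them), so assume $k\ge 5$, whence $\sigma\notin\{12,21\}$. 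Now fix two rooms of $F_\sigma$, and for each ordered pair $(G_1,G_2)$ of slicing floorplans with $a$ and $b$ rooms, where $a,b\ge 1$ and $a+b=n-k+2$, form the mosaic floorplan $F=F(G_1,G_2)$ by embedding $G_1,G_2$ into those two rooms and leaving the other $k-2$ rooms untouched; then $F$ has exactly $n$ rooms.

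I would then verify three things. First, $F$ is $\HFO_k$: it is assembled by embedding $\HFO_2$ (slicing) floorplans into a Uniquely $\HFO_k$ floorplan, so by definition of the hierarchy it is $\HFO_k$; equivalently, its skewed generating tree has a root of out-degree $k$ labelled by $\sigma$ whose subtrees are skewed slicing trees and isolated leaves, the skewness rule at the root being vacuous since $\sigma\ne 12,21$. Second, $F$ is not $\HFO_j$ for any $j<k$: since a generating tree realizes a hierarchical block decomposition of the Abe-label, the Abe-label of $F$ is $\sigma$ with the two points matching the distinguished rooms inflated by the (separable) Abe-labels $\alpha_1,\alpha_2$ of $G_1,G_2$ and all other points left as singletons; this permutation contains $\sigma$ as a pattern, and since $\sigma$ is a simple permutation of length $k$, Theorem~\ref{thm:HFO-k-bijection} shows $F$ is not $\HFO_{k-1}$, hence not $\HFO_j$ for any $j<k$. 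Third, the map $(G_1,G_2)\mapsto F$ is injective: floorplans with the same Abe-label coincide by Ackerman et al.'s bijection, and since $\sigma$ is simple with $\sigma\ne 12,21$, the Albert--Atkinson decomposition theorem quoted above recovers the blocks $\alpha_1,\dots,\alpha_k$ uniquely from $\sigma[\alpha_1,\dots,\alpha_k]$, so distinct $(G_1,G_2)$ give distinct Abe-labels.

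It remains to count the admissible pairs. Writing $s_m$ for the number of slicing floorplans with $m$ rooms, which is the $(m-1)$-st large Schr\"oder number $S_{m-1}$ (Yao et al.), this count is $\sum_{a+b=n-k+2,\;a,b\ge 1}S_{a-1}S_{b-1}$. A standard generating-function manipulation of the Schr\"oder recurrence $S_m=S_{m-1}+\sum_{i=0}^{m-1}S_iS_{m-1-i}$ shows this sum equals $S_{n-k+1}-S_{n-k}$; the same recurrence gives $S_m\ge 3S_{m-1}$ for $m\ge 2$ (the terms $i=0$ and $i=m-1$ alone contribute $2S_{m-1}$), from which a short induction yields $S_{n-k+1}-S_{n-k}\ge 3^{\,n-k}$ for all $n\ge k$. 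Hence there are at least $3^{\,n-k}$ distinct $\HFO_k$ permutations of length $n$, none of which is $\HFO_j$ for $j<k$.

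The step I expect to be most delicate is the second one, specifically pinning down that embedding slicing floorplans into two rooms of $F_\sigma$ produces exactly the wreath product $\sigma[\alpha_1,\dots,\alpha_k]$ as the Abe-label: this is what lets the Albert--Atkinson uniqueness deliver injectivity of the construction. The membership in $\HFO_k$, the exclusion from lower levels via Theorem~\ref{thm:HFO-k-bijection}, and the Schr\"oder bookkeeping are all comparatively routine once the correspondence between \emph{embedding a floorplan in a room} and \emph{inflating the corresponding point of the Abe-label} has been made precise.
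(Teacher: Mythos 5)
Your proof is correct, but it takes a genuinely different route from the paper's. The paper adapts the insertion-vector idea of Chung et al.: starting from a Uniquely $\HFO_k$ permutation $\pi_k$ of length $k$, it inserts $k+1,k+2,\dots,n$ one at a time, arguing by a direct case analysis on the patterns $3142/2413$ and on simple patterns of length greater than $k$ that four sites (before/after the whole permutation, before/after the current maximum) are always safe and that at least three of them are distinct; this yields $3^{n-k}$ immediately. You instead inflate two fixed points of a simple Baxter permutation $\sigma$ of length $k$ by arbitrary separable (slicing) permutations and count the pairs via the Schr\"oder convolution $\sum_{a+b=n-k+2}S_{a-1}S_{b-1}=S_{n-k+1}-S_{n-k}\ge 3^{n-k}$ (your arithmetic checks out: the convolution identity follows from the standard recurrence, and $S_{m+1}-S_m\ge 2S_m\ge 4\cdot 3^{m-1}>3^m$ for $m\ge 1$, with equality $1=3^0$ at $m=0$). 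Your route leans on heavier machinery that the paper has already established --- Theorem~\ref{thm:HFO-k-bijection} for the exclusion from $\HFO_{k-1}$, Albert--Atkinson uniqueness for injectivity, and the embedding-equals-inflation correspondence, which the paper itself uses implicitly in the generating-tree bijection and in the proof of Theorem~\ref{thm:HFO-k-bijection}, so you are entitled to it --- but in exchange it avoids the paper's delicate hand-verification of safe insertion sites and in fact proves the stronger lower bound $S_{n-k+1}-S_{n-k}$, which grows like $(3+2\sqrt{2})^{n-k}$ rather than $3^{n-k}$. One small caveat: when no Uniquely $\HFO_k$ floorplan with $k$ rooms exists (e.g.\ $k=3,4$), the theorem as stated is false rather than vacuous as you put it; but this is a defect of the statement that the paper's own proof shares, since it too begins by assuming such a $\pi_k$ exists.
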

\begin{proof}The proof is inspired by the insertion vector scheme
  introduced by \cite{Chung1978382} to enumerate the admissible
  arrangements for Baxter permutations. The idea is to start with a
  Uniquely $\HFO_k$ permutation which is of length $k$ say $\pi_{k}$
  and successively insert $(k+1,k+2,k+3,k+4,\ldots,n)$ onto it such a
  way that we are guaranteed that it remains both Baxter and no
  patterns from simple $k+1$ or exceptionally simple $k+2$ is
  introduced so the final permutation $\pi_{n}$ is of the desired
  property. It is very clear that if you insert $k+1$ onto two
  different positions of $\pi_{k}$ you get two different
  permutations. It is also not hard to see that if you start with two
  different permutations $\pi_{i}^{'}$ and $\pi_{i}^{''}$ then there
  is no sequence of indices to which insertion of
  $(i+1,i+2,i+3,i+4,\ldots,n)$ will make the resulting permutations
  the same. Hence by counting the number of ways to insert
  $(k+1,k+2,k+3,k+4,\ldots,n)$ successively, a lower bound on the
  number of $\HFO_k$ permutations is obtained.  So the problem boils
  down to counting the number of ways to insert $i+1$ given a
  permutation $\pi_{i}$ which is $\HFO_k$ but not $\HFO_j$ for
  $j<k$. We do not have an exact count for this but it is easy to
  observe that in such a permutation $\pi_{i}$ there are always four
  locations which are safe for insertion of $i+1$ irrespective of
  relative order of elements of $\pi_{i}$. 
  By safe we mean that insertion of $i+1$ to $\pi_i$ would 
  neither make the resulting permutation non-Baxter nor will it make
  a non-simple permutation. The four safe locations are : 
\begin{enumerate}
\item Before the first element of $\pi_{i}$.
\item After the last element of $\pi_{i}$.
\item Before $i$ in $\pi_{i}$.
\item After $i$ in $\pi_{i}$.
\end{enumerate}
Let us prove that these sites are actually safe for insertion of $i+1$.
We will first prove that insertion of $i+1$ onto these sites cannot
introduce a pattern which matches a simple permutation of length $j,j>k$.
Suppose $i+1$ is inserted before or after $i$ in $\pi_{i}$ and
the newly obtained permutation has a \textbf{text} which matches a simple
permutation of length $j,j>k$ .The \textbf{text}  must be involving $i+1$ as
otherwise $\pi_{i}$ will also contain the same pattern. The text
matching the pattern cannot involve $i$ also, as if it does then the
pattern will have two consecutive integers corresponding to the location
of $i$ and $i+1$ in the text making it not a simple permutation
. Thus the text matching the pattern must involve $i+1$ and it must
not involve $i$, but then replacing $i$ by $i+1$ we get a text
in $\pi_{i}$ matching the same pattern contradicting our assumption
that $\pi_{i}$ is $\HFO_k$. Now it remains to prove that inserting
$i+1$ before or after $\pi_{i}$ is safe. 
Suppose insertion
of $i+1$ before or after $\pi$ introduces a text matching a simple
permutation of length $j>k$, then the text must involve $i+1$. But
since $i+1$ is greater than any other element in $\pi_{i}$ in the
pattern of length $j$, $i+1$ will be matched with the number $j$. But then it
would mean that pattern is a permutation $\sigma$ on $[j]$ which
has $j$ as its first/last element as $i+1$ is placed after or before $\pi_i$.
This contradicts our assumption that
the pattern is a simple permutation as $\sigma$ maps either $\left\{ 2,\ldots,j\right\} $
to $\left\{ 1,\ldots,j-1\right\} $ or $\lbrace 1,\dots,j-1 \rbrace$ to $ \lbrace 1,\dots,j-1 \rbrace$ which is a proper sub-range. Hence it is not
possible that insertion of $i+1$ onto any of these locations introduces
a text matching a pattern from simple permutations of length $j,j>k$.
Now it remains to prove that the insertion of $i+1$ cannot
introduce any text which matches $3142/2413$ with absolute difference
between first and last being one.
Suppose it did, then it has to involve $i+1$
since $\pi_{i}$ is Baxter, and if it involves $i+1$, $i+1$ will
have to match $4$ in $3142/2413$ as there is no element greater
than $(i+1)$ in $\pi_{i}$. But $i+1$ matching $4$ is not possible
because in the first case there is nothing to the left of $i+1$,
in the second case there is nothing to the right of $i+1$, and in
third and fourth cases this is not possible for the reason that if
$2413/3142$ involves both $i$ and $i+1$ then $i$ has to match
$3$ and $i+1$ has to match $4$ as they are the second largest and
largest elements in the new permutation but this is not possible in
these cases as $i$ is adjacent to $i+1$ and there cannot be any
element matching $1$ in between them. Hence in these cases the only
possibility left is that $i+1$ is matched to $4$ in $3142/2413$
but the text matching the pattern does not involve $i$ and since
$i$ is adjacent to $i+1$ and greater than any element of $\pi_{i}$
it can be replaced for $i+1$ to get $3142/2413$ in $\pi_{i}$ with
the absolute difference between first and last being one, contradicting
the fact that $\pi_{i}$ is Baxter. Hence we have proved that introduction
of $i+1$ in these sites are safe.

Note that even though we have identified four safe locations for insertion
of $i+1$ into a $\pi_{i}$ sometimes $i$ could be the first element
of the permutation $\pi_{i}$ thus making the location before $i$
and location before $\pi_{i}$ one and the same. Similarly if $i$
is the last element the location after $i$ and location after $\pi_{i}$
also coincides. But for any permutation $\pi_{i}$ only one of the
above two conditions can occur, so there are always three distinct locations
to insert $i+1$. Now by starting from a Uniquely $\HFO_k$ permutation
we can get $3^{n-k}$ different permutations by inserting successive
elements from $\left\{ k+1,k+2,k+3,k+4,\ldots,n\right\} $. Hence
the theorem.

\end{proof}

\section{Simulated Annealing for $\HFO_k$ family of floorplans}
Wong and Liu\cite{wong1986new}
designed a set moves for $\HFO_2$ floorplans based on the post-order traversal
of the the corresponding skewed order $2$ generating tree. Later \cite{WongAndThe}
extended this idea to $\HFO_5$ floorplans. Now based on our result which unified
the way $\HFO_k$ floorplans are represented using generating trees we can easily
extend the moves defined by \cite{WongAndThe} to $\HFO_k$ floorplans. We also prove
that the solution space thus obtained is connected and is of diameter $O(n^2)$.
We also prove that our solution space is \textbf{P-admissible} except for the
last property which requires the search space to include the optimal floorplan 
for a given floorplanning problem. Almost all of the solution spaces for floorplanning
problem cannot guarantee this property. This is because the optimal solution to
floorplanning problem may contain empty rooms and finding the number of optimal
empty rooms for an instance of floorplanning problem is in itself an open problem.
\subsection{Simulated Annealing Moves for $\HFO_k$ floorplans}

The moves described in \cite{WongAndThe} can be easily generalized to any $\HFO_k$ 
provided that you can capture the floorplan using a floorplan tree
, find out Uniquely $\HFO_l$ floorplans for $l \leq k$ - the
 internal nodes of the tree -, and a nice representation
of these floorplans to serve as operators in the normalized polish
expression. We have already proved that $\HFO_k$ floorplans are 
in bijective correspondence with skewed generating trees of order $k$.
 We also provided an algorithm(Algorithm \ref{Flo:AlgoHFOkFindingK}) to
find the minimum value of $k$ for which a given Baxter permutation is
also a permutation corresponding to an $\HFO_k$ floorplan. Hence 
we can run this algorithm on all Baxter permutations of length $k$
and find out permutations corresponding to Uniquely $\HFO_k$ permutations
because Uniquely $\HFO_k$ are $\HFO_k$ floorplans such that $k$ is
the minimum such integer for which they are $\HFO_k$. The bijection
given by \cite{Ackerman20061674} can be used to represent a
Uniquely $\HFO_k$ floorplan as permutation of length $k$.
We assume an implicit left-to-right ordering among the children
of internal nodes in generating trees of order $k$ and then
 use the post-order traversal of the tree to represent corresponding floorplan.
To 
distinguish operators from operands we enclose permutation
corresponding to Uniquely $\HFO_k$ in set of $\left[ \right]$ parenthesis.

We will now formally define \textbf{normalized polish expressions
of length $k$} which corresponds to post order traversal of a skewed
generating tree of order $k$. A normalized polish expression of length $k$ 
is a sequence $\alpha = \alpha_1,\alpha_2,\alpha_3,\dots,\alpha_m$ of elements from
$\lbrace 1,2,3,\dots,n,\lbrace [\pi] | \pi \in S_j, j\leq k \rbrace \rbrace$ satisfying the following conditions.
Let $x_{ji}$ represent the number of operators which are 
permutations of length $j$ enclosed within $\left[ \right]$ brackets in 
the sequence $\alpha_1,\alpha_2,\alpha_3,\dots,\alpha_i$. And $y_i$ represents the number
of operands in the sequence $\alpha_1,\alpha_2,\alpha_3,\dots,\alpha_i$. 
\begin{itemize}
\item for each $j\leq n$ there exists a unique index $k$ such that $\alpha_k = j$.
\item $\sum_{j=2}^{k}(j-1)x_{ji}< y_i$, for all $i=1,2,3,\dots,m$.
\item $\alpha_i\alpha_{i+1} \neq [12][12]$ and $\alpha_i\alpha_{i+1} \neq [21][21]$ for each $i$ in $1,2,3,\dots,m-1$.
\end{itemize} 

Figure~\ref{fig:ExampleNormPolishExpOrderK} shows the normalized $2-5$ polish expression corresponding
to the $\HFO_7$ floorplan in the figure.

\begin{figure}
  \centering
  
  \caption{Example normalized polish expression of order $k$ and corresponding floorplan}
  \label{fig:ExampleNormPolishExpOrderK}
  
  \includegraphics[scale=0.35]{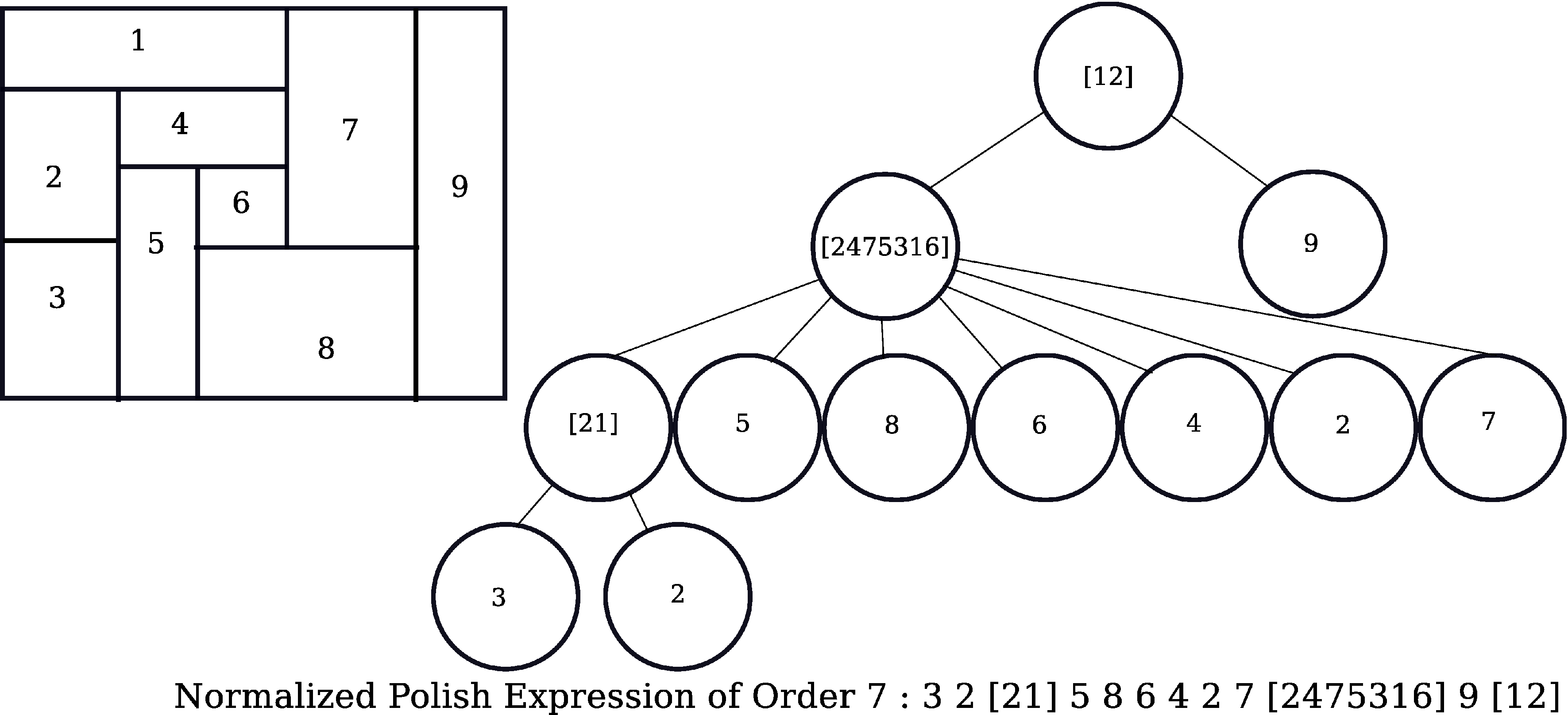}
\end{figure}

Now we will define moves on the Normalized Polish Expressions of order $k$
to define the neighbourhood relationship amongst $\HFO_k$ floorplans.

\begin{enumerate}
\item \textbf{M1}: \textit{Swap two adjacent elements}
  \begin{enumerate}
  \item operand $\longleftrightarrow$ operand

        e.g. $45312[21]2[12]6\mathbf{87}[41352][21] \rightarrow 4531[21]26\mathbf{78} [41352][21]$

        Two elements are \textit{adjacent} if they are adjacent in the
        sequence obtained by removing all operators from the normalized
        $2-5$ expression.
  \item operand $\longleftrightarrow$ operator

        e.g. $45312[21]2[12]68\mathbf{7[41352]} [21] \rightarrow 4531[21]268\mathbf{[41352] 7}[21]$
  \item operator $\longleftrightarrow$ operator 

        e.g. $45312[21]2[12]687 [41352] [21] \rightarrow 4531[21]268 7 [41352] [21]$

        In this case it is required that at most one operator is slicing($[12]/[21]$).
  \end{enumerate}
\item \textbf{M2}: \textit{Complimenting}
  \begin{enumerate}
  \item Complement a maximal chain

        e.g. $67812345[25314] [12] [21] [12] \rightarrow 67812345[25314] [21] [12] [21]$

        A \textbf{maximal chain} is a sequence of  slicing operators $\alpha_i,\alpha_{i+1},\alpha_{i+2},\dots,\alpha_j$ in
        the given normalized expression $\alpha$ such that $\alpha_{i-1}$ and $\alpha_{j+1}$
        if they exists should not be slicing operators.
  \item \textit{Complement an $\HFO_j$ operator}

        In this operation you can take Uniquely $\HFO_j$ operator in the given normalized
        polish expression and replace it with another Uniquely $\HFO_j$ operator.

        e.g. $67812345\mathbf{[41352]} [12] [21] [12] \rightarrow 67812345\mathbf{[25314]} [12] [21] [12]$ 
  \end{enumerate}
\item \textbf{M3}: \textit{Create an $\HFO_j$ operator}

      e.g. $6 7 8 1 4 5 [12] 3 [12] 2 [12] [21] [12] [21] [12] [21] \rightarrow 6 7 8 1 4 5 3 2 [41352] [12] [21] [12]$

      Select a composite rectangle that can be partitioned into $j$ basic/composite 
      rectangles which are not arranged into a Uniquely $\HFO_j$ floorplan and
      re-arrange the $j$ components into an Uniquely $\HFO_j$ floorplan. 
\item \textbf{M4}: \textit{Destroy an $\HFO_j$ operator}

      e.g. $32[21]586427[2475316]9[12]\rightarrow 32[21]586427[41352][12][21]9[12]$
  
      Here we differ slightly from \cite{WongAndThe}. Since $j$ 
      can vary from $2$ to $k$, the replacement policy is uniform.
      We replace a wheel operator like they do. But for destroying
      an $\HFO_j$ operator for $j>5$, we replace it with a wheel
      operator in the beginning and a chain of alternating slicing 
      operators such that the normalization property is not violated.
\end{enumerate}

The moves \textbf{M1}, \textbf{M1(b)} and
\textbf{M2(c)} might produce a sequence that violates condition $2$ in the set
of conditions for normalized polish expressions expression of order $k$. But
here also, checking whether resulting expression is normalized can be done efficiently.  

Given a normalized polish expression of order $k$, it neighbours are all valid
normalized polish expression which can be obtained by a single move from the 
list of moves above. It can be proved that the diameter of the solution
space, that is the maximum distance between two valid normalized polish expressions
of order $k$ of length $n$, is $O(n^2)$. We prove this by observing that within 
$O(n)$-destroy $\HFO_j$ operator moves, all the operators in the given expression
can be made slicing operators. For each operator, with $O(n)$ operand-operator
swap moves, it can be moved to the end of the expression. Hence within $O(n^2)$ 
steps any normalized expression of order $k$ of length $n$ can be transformed into
an expression where all the operands are at the beginning and all the operators 
are at the end, and are slicing operators. The moves are defined such that if 
an expression can be obtained from another using a single move, there exists 
another moves which returns it back to the original. Hence we have proved existence
of a normalized polish expression of order $k$  which is a distance of $O(n^2)$ from
any other expression. Hence between two normalized polish expressions of order $k$,
there is a path of length $O(n^2)$ through this special node.

\section{Recurrence relation for the number $\HFO_5$ floorplans}

Hierarchical Floorplans of Order $5$ is the only $\HFO_k$ other than slicing
floorplans which have been studied in the literature to the best of our knowledge.
Since we have proved that the number of distinct $\HFO_5$ floorplans
with $n$ rooms is equal to the number of distinct skewed generating
trees of order $5$ with $n$ leaves(also proved by \cite{WongAndThe})
it suffices to count such trees .  Let $t_{n}$ denote the number of
distinct skewed generating trees of order $k$ with $n$ leaves and
$t_{1}=1$ representing a tree with a single node. Let $a_{n}$ denote
such trees whose root is labeled $12$, $b_{n}$ denote trees whose root
is labeled $21$, $c_{n}$ denote trees whose root is labeled $41352$
and $d_{n}$ denote the trees whose root is labeled $25314$. Since
these are the only Uniquely $\HFO_k$ permutations for $k\leq5$ the
root has to labeled by one of these. Hence

\[
t_{n}=a_{n}+b_{n}+c_{n}+d_{n}\]

Since it is a skewed tree if the root is labeled $12$, its left child
cannot be $12$ but it can be $12$,$41352$ ,$25314$ or a leaf node.
Similarly if the root is labeled $21$ its left child cannot be $21$
but it can be $12$,$41352$ ,$25314$ or a leaf node.  But for trees
whose roots are labeled $41352/25314$ can have any label for any of
the five children. Hence we get,

\begin{eqnarray*}
a_{n} & = & t_{n-1}.1+\Sigma_{i=2}^{n-1}t_{n-i}(b_{i}+c_{i}+d_{i})\\
b_{n} & = & t_{n-1}.1+\Sigma_{i=2}^{n-1}t_{n-i}(a_{i}+c_{i}+d_{i})\\
c_{n} & = & \Sigma_{\left\{ i,j,k,l,m\geq1|i+j+k+l+m=n\right\} }t_{i}t_{j}t_{k}t_{l}t_{m}\\
d_{n} & = & \Sigma_{\left\{ i,j,k,l,m\geq1|i+j+k+l+m=n\right\} }t_{i}t_{j}t_{k}t_{l}t_{m}\end{eqnarray*}

So $c_{n}=d_{n}$. Also note that since a node labeled $41352/25314$
ought to have five children, $c_{n,}d_{n}=0$ for $n<5$. Summing
up $a_{n}$ and $b_{n}$ and using the identity $t_{i}=a_{i}+b_{i}+c_{i}+d_{i}$
we get

\begin{align*}
a_{n}+b_{n} & =t_{n-1}+t_{n-1}t_{1}+\Sigma_{i=2}^{n-1}t_{n-i}(a_{i}+b_{i}+c_{i}+d_{i}+c_{i}+d_{i})\\
 & =t_{n-1}+\Sigma_{i=1}^{n-1}t_{n-i}t_{i}+2\Sigma_{i=2}^{n-1}t_{n-i}c_{i}\end{align*}

If we substitute for $c_{i}$ in $\Sigma_{g=1}^{n-1}t_{n-g}c_{g}$,
we will get 

\[
\Sigma_{\left\{ h,i,j,k,l,m\geq1|h+i+j+k+l+m=n\right\} }t_{h}t_{i}t_{j}t_{k}t_{l}t_{m}\]
because if you notice the $t_{n-g}$ runs from $1$ to $n-1$ and
$i,j,k,l,m$ in the expansion of $c_{i}$ sums up to $g$, hence if
we let $h=n-g$ then we get $h+i+j+k+l+m=n$. Thus we get the following
recurrence for $t_{n}$

\begin{center}$\begin{array}{ccc}
t_{n} & = & t_{n-1}+\Sigma_{i=1}^{n-1}t_{n-i}t_{i}+\\
 &  & 2\Sigma_{\left\{ h,i,j,k,l,m\geq1|h+i+j+k+l+m=n\right\} }t_{h}t_{i}t_{j}t_{k}t_{l}t_{m}+\\
 &  & 2\Sigma_{\left\{ i,j,k,l,m\geq1|i+j+k+l+m=n\right\} }t_{i}t_{j}t_{k}t_{l}t_{m}\end{array}$\end{center}

We were not able to solve the recurrence using 
the ordinary generating function $T(z)$ 
associated with the sequence $t_n$ defined below.

\[
T(z)=\Sigma_{n=1}^{\infty}t_{n}z^{n-1}
\]

We multiplied the recurrence with $\Sigma_{n=1}^{\infty}z^{n-1}$,
to get,
\[
T(z)=zT(z)+zT^{2}(z)+z^{4}T^{5}(z)+z^{5}T^{6}(z)+t_{1}
\]
Substituting $t_{1}=1$, we get the following polynomial equation
in $T(z)$

\[
z^{5}T^{6}(z)+z^{4}T^{5}(z)+zT^{2}(z)+(z-1)T(z)+1=0
\]
Unfortunately this is a polynomial of sixth degree. Hence
no general solution is available for its roots, which are
needed to obtain the closed form expression for the above
recurrence relation.

Note that in a similar way recurrence relation for any $\HFO_k$
can be constructed by counting the skewed generating trees of order
$k$ where the roots can be any Uniquely $\HFO_l$ permutation for
$l\leq k$.  With our characterization of $\HFO_l$ and the algorithm
for checking whether a permutation is  $\HFO_l$, we can easily find out the number 
of Uniquely $\HFO_l$ permutations for any $l$ and easily get the
recurrence for any $\HFO_k$ using the above mentioned method.

\section{Poly-time Algorithm for counting $\HFO_5$ permutations}

Note that the recurrence obtained above can be used to construct a
polynomial time algorithm for finding $t_{n}$ thus the number of
distinct $\HFO_5$ floorplans with $n$ rooms. We are going to use
dynamic programming to compute the value of $t_{n}$ using the above
recurrence relation. The algorithm is fairly straight forward.

The table $T$ is used to store the values of $t_i,1\leq i \leq n$.
The \textbf{for loop} of lines 3-29, computes successive values of
$t_i$ using the recurrence relation we obtained earlier.
The algorithm runs in time $O(n^{6})$. In general the algorithm for $\HFO_k$
based on a recurrence obtained using the above method will run in
time $O(n^{k+1})$.

\begin{algorithm}

\SetKwInOut{Input}{Input}
\SetKwInOut{Output}{Output}

\SetKwData{MVar}{m}
\SetKwData{IVar}{i}
\SetKwData{JVar}{j}
\SetKwData{KVar}{k}
\SetKwData{LVar}{l}
\SetKwData{HVar}{h}
\SetKwData{XVar}{x}
\SetKwData{YVar}{y}
\SetKwData{ZVar}{z}
\SetKwData{TVar}{T}

\SetKwFunction{Array}{Array}

\SetKw{New}{new}

\SetKwFunction{Min}{min}

\TVar$\leftarrow$\New \Array{n}\;
\TVar$[1]\leftarrow 1$\;

\For{\MVar$=2$ \KwTo $n$} {
  \XVar$\leftarrow$0,\YVar$\leftarrow$0,\ZVar$\leftarrow$0\;
  \For{\IVar$=1$ \KwTo \MVar$-1$}{
    \XVar$\leftarrow$\XVar+\TVar$[\IVar][\MVar-\IVar]$\;
  }
  \For{$\IVar=1$ \KwTo $\MVar-4$}{
    \For{$\JVar=1$ \KwTo \Min{{\MVar-\IVar},{\MVar$-4$}}}{
      \For{$\KVar=1$ \KwTo \Min{{\MVar-$(\IVar+\JVar)$},{\MVar$-4$}}}{
        \For{$\LVar=1$ \KwTo \Min{{\MVar-$(\IVar+\JVar+\KVar)$},{\MVar-$4$}} }{
          $\YVar \leftarrow \YVar + \TVar[\IVar]*\TVar[\JVar]*\TVar[\KVar]*\TVar[\LVar]*\TVar[\MVar-(\IVar+\JVar+\KVar+\LVar)]$\;
        }
      }
    }
  }
 \For{$\HVar=1$ \KwTo $\MVar-5$}{
   \For{$\IVar=1$ \KwTo \Min{{$\MVar-\HVar$},{$\MVar-5$}} } {
     \For{$\JVar=1$ \KwTo \Min{{$\MVar-(\HVar+\IVar)$},{$\MVar-5$}} } {
       \For{$\KVar=1$ \KwTo \Min{{$\MVar-(\HVar+\IVar+\JVar)$},{$\MVar-5$}} }{
         \For{$\LVar=1$ \KwTo \Min{{$\MVar-(\HVar+\IVar+\JVar+\KVar$},{$\MVar-5$}} } {
           $\ZVar \leftarrow \ZVar + \TVar[\HVar]*\TVar[\IVar]*\TVar[\JVar]*\TVar[\KVar]*\TVar[\LVar]*\TVar[\MVar-(\HVar+\IVar+\JVar+\KVar+\LVar)]$\;
         }
       }
     }
   }
 }
 \TVar$[\MVar] \leftarrow$ \XVar + 2\YVar + 2\ZVar + \TVar$[\MVar-1]$\;
}

\textbf{Output} \TVar$[n]$\;

\caption{Algorithm for producing the count of number of distinct $\HFO_5$ floorplans of $n$ rooms}
\end{algorithm}

\section{Properties of Baxter Permutations}

Baxter permutations are an interesting family of permutations combinatorially. They
were first introduced to solve a conjecture about fixed points of commutative functions
by . They are interesting from the VLSI perspective because of their bijective correspondence
to mosaic floorplans. In this chapter we explore some properties of Baxter permutation which
can be easily associated with properties of corresponding mosaic floorplans. The first such
property is that Baxter permutations are closed under inverse. We give a direct
proof for this by the method of contradiction. Then we prove that the mosaic floorplan 
corresponding to the inverse is obtained by taking a mirror image of the floorplan 
corresponding to the permutation about the horizontal axis.
Another such result is that reverse of a Baxter
permutation is also a Baxter permutation. This is a straight forward observation from the 
definition of Baxter permutations itself. But this result becomes interesting when the 
connection to geometry is made. 
The geometric operation on a mosaic floorplan corresponding to reverse on a 
Baxter permutation, is rotating the mosaic floorplan first by $90^{\circ}$s clockwise
and then taking a mirror image along the horizontal axis.

\subsection{Closure Under Inverse}
\begin{thm}
If a permutation $\pi\in S_{n}$ is Baxter then so is $\pi^{-1}$.
\end{thm}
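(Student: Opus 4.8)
The plan is to argue by contradiction, directly from the combinatorial definition of Baxter permutations. Suppose $\pi$ is Baxter but $\pi^{-1}$ is not. Then there exist four indices witnessing one of the two forbidden configurations for $\pi^{-1}$ --- say positions $a < b < c < d$ with $\pi^{-1}[c] < \pi^{-1}[a] + 1 = \pi^{-1}[d] < \pi^{-1}[b]$, or the symmetric condition with $\pi^{-1}[b] < \pi^{-1}[a] = \pi^{-1}[d] + 1 < \pi^{-1}[c]$. The key observation is that the inverse operation swaps the roles of positions and values: a statement about the relative order of the values $\pi^{-1}[i]$ at increasing positions $i$ translates into a statement about the positions at which consecutive values occur in $\pi$. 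So I would set $p = \pi^{-1}[a]$, $q = \pi^{-1}[b]$, $r = \pi^{-1}[c]$, $s = \pi^{-1}[d]$ and read off: $a = \pi[p]$, $b = \pi[q]$, $c = \pi[r]$, $d = \pi[s]$, with the constraint (in the first case) $r < p$, $p + 1 = s$, $s < q$, i.e. $r < p < s = p+1 < q$ as a chain of \emph{values}, while $a<b<c<d$ becomes $\pi[p] < \pi[q] < \pi[r] < \pi[s]$, a constraint on \emph{positions}.

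Next I would sort the four indices $p, q, r, s$ into increasing order and check that the resulting pattern on $\pi$ is again one of the two forbidden Baxter patterns, with the crucial ``consecutive values differ by one'' clause preserved. The point is that $p$ and $s = p+1$ are consecutive integers, so whichever of the two forbidden shapes we started from for $\pi^{-1}$, the corresponding shape for $\pi$ will have its ``first and last element'' (in the $3142$ / $2413$ sense) be exactly the pair of values $\{p, p+1\}$ --- or rather the pair $\{a,d\}$ of \emph{positions}, and one checks that in $\pi$ the roles get exchanged so that it is the values $p$ and $p+1$ that play the role of the endpoints. I expect the two forbidden cases for $\pi^{-1}$ to map to the two forbidden cases for $\pi$ (possibly swapping which of the two it is), so that a violation for $\pi^{-1}$ forces a violation for $\pi$, contradicting the hypothesis. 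By symmetry (since $(\pi^{-1})^{-1} = \pi$) the converse direction is automatic, giving the biconditional.

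The main obstacle, and the step I would be most careful about, is the bookkeeping when sorting $p,q,r,s$: the definition of Baxter permutation fixes the four witnessing indices in increasing \emph{positional} order, but the inverse turns our value-constraints into the positional order and vice versa, so I must re-index to present the witness for $\pi$ in the canonical $i<j<k<l$ form and verify that the inequalities among the $\pi$-values then read exactly as $\pi[k] < \pi[i]+1 = \pi[l] < \pi[j]$ (or the other case). The delicate part is confirming that the ``$+1$'' lands on the correct pair after re-sorting --- this is where the fact that $p$ and $s$ were forced to be \emph{adjacent integers} (not just adjacent in some ordering) does the real work. Once the index translation is pinned down, the argument is a finite case check over the two forbidden patterns and the few possible sorted orders of $p,q,r,s$ consistent with the value inequalities.
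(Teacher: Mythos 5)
Your setup is right — inverting swaps the roles of positions and values, and the translated four points do form a $3142$ (resp.\ $2413$) occurrence in $\pi$ — but the step you flag as ``delicate'' is where the argument actually breaks. Work it through in your own notation for the first case: the positions in $\pi$, sorted, are $r<p<s<q$ carrying the values $c,a,d,b$ with $a<b<c<d$, so the occurrence in $\pi$ has \emph{endpoints} with values $c$ and $b$, and nothing forces $c-b=1$. The constraint $s=p+1$ you inherited from the $\pi^{-1}$ violation lands instead on the \emph{middle} pair: the ``1'' and the ``4'' of the pattern sit at consecutive positions. In other words, inversion converts ``endpoints adjacent in value'' into ``middle pair adjacent in position,'' which is a different (vincular) condition and not by itself a Baxter violation. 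So the ``finite case check'' you defer to cannot close the proof; the claim that the $+1$ ends up on the correct pair after re-sorting is false, and this is precisely the content of the theorem rather than a bookkeeping detail.

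The paper bridges exactly this gap with a descent argument. Having produced in $\pi$ the occurrence $k,i,l,j$ of $3142$ (your $c,a,d,b$) with $i$ and $l$ at consecutive positions, it asks where the value $j+1$ can sit when $k>j+1$: since no element lies between the two adjacent positions, $j+1$ is either left of $i$'s position, giving the violation $j+1,i,l,j$ outright, or right of $l$'s position, giving a new occurrence $k,i,l,j+1$ whose endpoint gap $k-(j+1)$ is strictly smaller; iterating terminates in a genuine Baxter violation in $\pi$. You need this (or an equivalent extremal choice of witness) to complete your proof; without it the argument stops one step short.
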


We prove this by giving a direct prove using the method of contradiction.

\begin{proof}Suppose it is not, then there is a text matching $3142/2413$
with absolute difference of first and last being exactly one. Let
this text be at locations $i,j,k,l$ with $i<j<k<l$. Suppose $\pi^{-1}[i],\pi^{-1}[j],\pi^{-1}[k],\pi^{-1}[l]$
forms the pattern $2413$, the we know that $\pi^{-1}[i]+1=\pi^{-1}[l]$
and $\pi^{-1}[k]<\pi^{-1}[i]<\pi^{-1}[l]<\pi^{-1}[j]$. Hence $\left\{ i,j,k,l\right\} $
appears in the order $(k,i,l,j)$ in $\pi$ with $i,l$ appearing
in consecutive locations so they form the pattern $3142$. If $k=j+1$
then this violates the assumption that $\pi$ is Baxter. If $k>j+1$
then $j+1$ has to appear before $i$ or after $l$ in $\pi$ as $i$
and $l$ appear in consecutive positions. If $j+1$ appears before
$i$ in $\pi$ then $j+1,i,l,j$ forms the pattern $3142$ with absolute
difference of first and last being one thus violating the assumption
that $\pi$ is Baxter. So the only place $j+1$ could be is after
$l$, now consider $k,i,l,j+1$, this forms the pattern $3142$ and
$|k-(j+1)|<|k-j|$, so if still $|k-(j+1)|>1$ then we could apply
the same argument as above and include $j+2$. This process cannot
go on for ever as each time $|k-(j+i)|$ is decreasing in value. So
after $|k-j|-1$ steps you will get a text matching the pattern $3142$
with absolute difference of first and last being one thus contradicting
the assumption that $\pi$ is Baxter. Since we have exhausted all
the cases and arrived at a contradiction in each one our assumption
that $\pi^{-1}$ contained a text matching $2413$ with absolute difference
of first and last being one is wrong. Similarly it can be proved that
$\pi^{-1}$ does not contain any text matching $3142$ with absolute
difference of first and last being one. Hence the theorem. \end{proof}

We know prove that equivalent operation on a mosaic floorplan corresponding
to the inverse, is taking the mirror image about vertical axis.
\begin{thm}
  Let $f_{\pi}$ denote the mosaic floorplan corresponding to a Baxter
  permutation. For any given Baxter permutation $\pi$, the floorplan
  corresponding to inverse, $f_{\pi^{-1}}$ can be obtained from $f_{\pi}$
  by taking a mirror image about the horizontal axis.
\end{thm}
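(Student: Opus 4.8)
The plan is to unwind the FP2BP algorithm into two rankings of the rooms of a floorplan and to recognise inversion of the Abe-label as the swap of those two rankings, which geometrically is exactly a vertical flip. \emph{Step 1: reformulate FP2BP.} For a mosaic floorplan $f$ with $n$ rooms, let $a_f(r)$ be the rank of room $r$ in the top-left block-deletion order and $b_f(r)$ its rank in the bottom-left block-deletion order. By the description of FP2BP, the value of the Abe-label $\pi$ at position $i$ is the top-left label of the $i$-th room deleted from the bottom-left, i.e. $\pi(i)=a_f\!\left(b_f^{-1}(i)\right)$, so that $\pi=a_f\circ b_f^{-1}$ as a bijection of $\{1,\dots,n\}$. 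Hence $\pi^{-1}=b_f\circ a_f^{-1}$, which is exactly the output of the ``swapped'' procedure on $f$: list the rooms in top-left deletion order and record their bottom-left ranks. It therefore suffices to show that running ordinary FP2BP on the mirror image $m(f)$ of $f$ about the horizontal axis carries out precisely this swapped procedure on $f$.

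\emph{Step 2 (the crux): block deletion is equivariant under $m$.} I would prove the lemma that $m$ maps the bottom-left deletion sequence of $f$ onto the top-left deletion sequence of $m(f)$ room by room, and symmetrically maps the top-left deletion sequence of $f$ onto the bottom-left deletion sequence of $m(f)$. This is a one-step check closed under induction: $m$ is an isometry fixing the bounding rectangle, it sends horizontal segments to horizontal and vertical to vertical, it sends the top-left block of any floorplan to the bottom-left block of its image, and it sends each intermediate floorplan of a deletion sequence to the corresponding intermediate floorplan of the mirrored sequence. In a single deletion step, $m$ sends the bottom-right corner of a block to the top-right corner of the mirrored block, sends a $\dashv$-junction to a $\dashv$-junction and a $\perp$-junction to a $\top$-junction, and turns the operation ``shift the bottom edge upward, pulling the attached \textbf{T}-junctions'' into ``shift the top edge downward, pulling the attached \textbf{T}-junctions'' (while fixing ``shift the right edge leftward''). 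So the two deletion processes proceed in lock-step with mirrored residual floorplans, and induction finishes the lemma; in particular $a_{m(f)}(m(r))=b_f(r)$ and $b_{m(f)}(m(r))=a_f(r)$ for every room $r$ of $f$.

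\emph{Step 3: assemble.} Run FP2BP on $m(f_\pi)$. Room $m(r)$ receives the label $a_{m(f_\pi)}(m(r))=b_{f_\pi}(r)$, and these labels are then read off in the bottom-left deletion order of $m(f_\pi)$, which by the lemma is the order of increasing $a_{f_\pi}$; the $k$-th such room is $m\!\left(a_{f_\pi}^{-1}(k)\right)$. Hence the output at position $k$ equals $b_{f_\pi}\!\left(a_{f_\pi}^{-1}(k)\right)=(b_{f_\pi}\circ a_{f_\pi}^{-1})(k)=\pi^{-1}(k)$, so $\mathrm{FP2BP}(m(f_\pi))=\pi^{-1}$. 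Since $\pi^{-1}$ is again Baxter by the preceding theorem, $f_{\pi^{-1}}$ is well defined, and since FP2BP is a bijection between mosaic floorplans and Baxter permutations \cite{Ackerman20061674}, we conclude $f_{\pi^{-1}}=m(f_\pi)$, i.e. $f_{\pi^{-1}}$ is the mirror image of $f_\pi$ about the horizontal axis.

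\emph{Expected main obstacle.} The delicate part is Step 2: one must first pin down the ``bottom-left block deletion'' operation, which the excerpt only indicates is ``defined for the other corners'', so that it is literally the $m$-conjugate of top-left deletion --- inspect the top-right corner of the bottom-left block, and for a $\dashv$-junction shift its top edge downward while for a $\top$-junction shift its right edge leftward --- and then verify the handful of junction cases showing each deletion step commutes with $m$ exactly. A minor preliminary, needed so that $f_{\pi^{-1}}$ even makes sense as a target, is that $m$ preserves the class of mosaic floorplans: it creates no empty rooms and no cross junctions, which is immediate since $m$ is an isometry fixing the bounding rectangle and carrying \textbf{T}-junctions to \textbf{T}-junctions.
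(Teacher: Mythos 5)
Your proof is correct, but it takes a genuinely different route from the paper's. The paper argues pairwise: for indices $i<j$ it reads off from the two deletion orders whether room $\pi[i]$ is to the left of or below room $\pi[j]$ in $f_{\pi}$, observes that passing to $\pi^{-1}$ preserves every left-of relation and reverses every below relation, and then identifies the unique geometric operation with that effect as the horizontal-axis mirror. That argument leans on the (cited) correspondence between relative order in the two deletion sequences and the left-of/below relations, and on the implicit fact that these pairwise relations determine a mosaic floorplan up to equivalence. You instead work at the level of the algorithm itself: writing $\pi=a_f\circ b_f^{-1}$, proving the equivariance lemma that the mirror $m$ swaps the top-left and bottom-left deletion sequences, concluding $\mathrm{FP2BP}(m(f_\pi))=\pi^{-1}$, and finishing by injectivity of FP2BP. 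Your route buys a more self-contained and arguably more rigorous conclusion (no appeal to pairwise relations determining the floorplan), at the price of having to pin down the bottom-left deletion operation precisely and check the junction cases ($\dashv\mapsto\dashv$, $\perp\mapsto\top$) for commutation with $m$ --- exactly the obstacle you identify; the paper's route is shorter because it outsources that geometric content to the properties of the Abe-label established by Ackerman et al.
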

\begin{proof}
Let $\pi$ be a Baxter permutation of length $n$. Let us take two indices
$i$ and $j$ such that $i<j$. Consider $\pi[i]$ and $\pi[j]$, either
$\pi[i] < \pi[j]$ or $\pi[i] > \pi[j]$. 

\textbf{Case I:$\pi[i] < \pi[j]$}

Since $\pi[i] < \pi[j]$ and $\pi[i]$ appears before $\pi[j]$, $\pi[i]$ is to the left of $\pi[j]$
in the mosaic floorplan corresponding to $\pi$, denoted by $f_{\pi}$.
In the inverse of $\pi$, $\pi^{-1}$ indices $\pi[i]$ and $\pi[j]$ will
be mapped to $i$ and $j$ respectively. Hence in $f_{\pi^{-1}}$, the basic
rectangles labeled by $i$ and $j$ will be such that $i$ precedes $j$ 
in the top-left deletion ordering(as $i<j$) and also in bottom
left deletion ordering(as $\pi[i]<\pi[j]$). Hence $i$ is to the left of $j$
in $f_{\pi^{-1}}$.

\textbf{Case II:$\pi[i] > \pi[j]$}

Since $\pi[i] > \pi[j]$ and $\pi[i]$ appears before $\pi[j]$ , $\pi[i]$ is below $\pi[j]$
in $f_{\pi}$.
In the inverse of $\pi$, $\pi^{-1}$ indices $\pi[i]$ and $\pi[j]$ will
be mapped to $i$ and $j$ respectively. Hence in $f_{\pi^{-1}}$, the basic
rectangles labeled by $i$ and $j$ will be such that $i$ precedes $j$ 
in the top-left deletion ordering(as $i<j$) but in bottom
left deletion ordering $j$ precedes $i$(as $\pi[i]<\pi[j]$). Hence $i$ is above $j$
in $f_{\pi^{-1}}$.

Hence we get a mapping between the basic rectangles of $f_{\pi}$ and
basic rectangles of $f_{\pi^{-1}}$, such that whenever $\pi[i]$ is
below $\pi[j]$, their images $i,j$ will be such that $i$ will be above
$j$.  And whenever $\pi[i]$ is to the left of $\pi[j]$ so is $i$ and
$j$. The geometrical operation which flips the above/below relation
but does not affect the left/right relation is flipping the object
about the horizontal axis. Hence the theorem. Figure
\ref{fig:GeomEquivInverse} illustrates the above mentioned link
between inverse and the geometry.
\begin{figure}
  \centering
  
  \caption{Obtaining a mosaic floorplan corresponding to the inverse of a Baxter permutation}
  \label{fig:GeomEquivInverse}
  
  \includegraphics[scale=0.4]{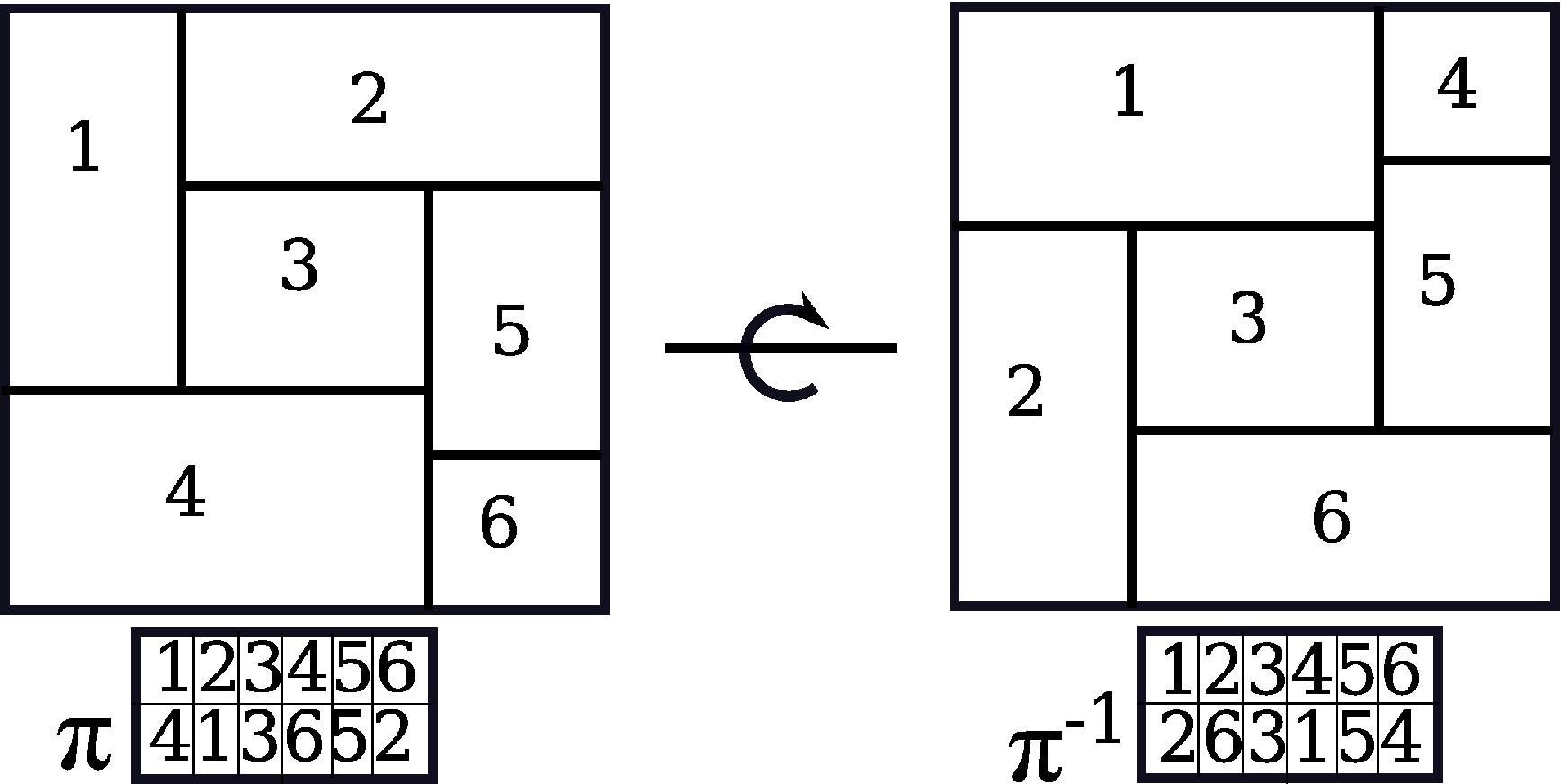}
\end{figure}
\end{proof}

\subsection{Closure under reverse}

\begin{thm}
  If $\pi$ is a Baxter permutation then so is its reverse, more over
  the mosaic floorplan corresponding to reverse of $\pi$ can be
  obtained from mosaic floorplan corresponding to $\pi$ by applying
  clockwise rotation by $90^{\circ}$s and then applying reflection 
  about the vertical axis.
\end{thm}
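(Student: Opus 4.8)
The plan is to treat the statement as two claims: (1) $\pi^R$ (the reverse of $\pi$) is Baxter, and (2) $f_{\pi^R}$ is the image of $f_\pi$ under a $90^{\circ}$ clockwise rotation followed by reflection about the vertical axis. Claim (1) is immediate from the definition: a subsequence of $\pi$ at positions $i_1<i_2<i_3<i_4$ reappears in $\pi^R$ at positions $n+1-i_4<n+1-i_3<n+1-i_2<n+1-i_1$ with its entries $\pi[i_1],\pi[i_2],\pi[i_3],\pi[i_4]$ read in the opposite order; since the word $3142$ read backwards is $2413$ and vice versa, and since the two extreme entries of the subsequence are only interchanged, the forbidden pattern set $\{3142,2413\}$ together with the side condition ``first and last entry differ by exactly $1$'' is invariant under reversal. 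Hence $\pi$ contains a forbidden configuration iff $\pi^R$ does.

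For claim (2) I would reuse the ``dictionary'' already extracted from the FP2BP bijection in the proof of closure under inverse: for any two rooms $X,Y$ of a mosaic floorplan, $X$ lies to the left of $Y$ iff $X$ precedes $Y$ in both the top-left and the bottom-left deletion orders, while $X$ lies below $Y$ iff $X$ follows $Y$ in the top-left order but precedes $Y$ in the bottom-left order (``right of'' and ``above'' are obtained by swapping the roles of $X$ and $Y$), and these four cases are exhaustive and mutually exclusive. Now compare $f_\pi$ with $f_{\pi^R}$ through the label-preserving correspondence --- the room carrying Abe-label $k$ in $f_\pi$ matched with the room carrying Abe-label $k$ in $f_{\pi^R}$, which is legitimate because $\pi$ and $\pi^R$ use the same label set. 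The top-left deletion order is exactly the order of the labels, so this correspondence preserves it; and the room with label $k$ occupies bottom-left position $\pi^{-1}[k]$ in $f_\pi$ and position $(\pi^R)^{-1}[k]=n+1-\pi^{-1}[k]$ in $f_{\pi^R}$, so the correspondence reverses the bottom-left order. Feeding ``top-left order preserved, bottom-left order reversed'' into the dictionary shows that the correspondence sends every ``$X$ to the left of $Y$'' relation of $f_\pi$ to ``$X$ above $Y$'' in $f_{\pi^R}$, and every ``$X$ right of $Y$'' to ``$X$ below $Y$''.

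The last step is to identify the planar transformation realizing this exchange of axes: it must send the rightward direction to the downward direction and the leftward direction to the upward direction, i.e. it is the map $(x,y)\mapsto(-y,-x)$, which is reflection in the anti-diagonal and is precisely a $90^{\circ}$ clockwise rotation $(x,y)\mapsto(y,-x)$ followed by reflection about the vertical axis $(x,y)\mapsto(-x,y)$. Since a mosaic floorplan is determined --- up to the equivalence of floorplans used throughout the paper --- by the collection of its pairwise room relations (equivalently, by the pair of deletion orders, which is exactly the data recorded by the Ackerman bijection), $f_{\pi^R}$ coincides with the image of $f_\pi$ under this transformation, which is the assertion.

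The routine pieces are the pattern argument for (1) and the identity $(\pi^R)^{-1}[k]=n+1-\pi^{-1}[k]$. The step I expect to need the most care is pinning down this composite transformation unambiguously: the sense of the rotation and the choice of reflection axis must be exactly right --- reflecting about the horizontal axis instead (as the abstract states) would give the transpose $(x,y)\mapsto(y,x)$, which does \emph{not} implement ``left $\mapsto$ above'' --- so I would sanity-check the composite on the base cases, namely that a vertical cut must map to a horizontal cut and a wheel must map to the wheel of opposite chirality, before declaring it correct.
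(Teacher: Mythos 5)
Your proof is correct and takes essentially the same route as the paper: reversal-invariance of the forbidden patterns for the Baxter claim, then tracking how the top-left and bottom-left deletion orders (hence the left-of/above relations already extracted in the inverse theorem) transform under reversal, and finally identifying the planar map $(x,y)\mapsto(-y,-x)$. Your sanity check on the axis is well taken and resolves an inconsistency in the paper itself: the paper's proof concludes with a mirror image about the \emph{horizontal} axis, which would give the transpose $(x,y)\mapsto(y,x)$ and contradicts both the theorem statement and the derived relations, whereas your choice of the vertical axis is the correct one.
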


\begin{proof}
  By definition Baxter permutations itself it is clear that the
  reverse of a Baxter permutation is also a Baxter permutation. 
  But let us find out what is the equivalent operation on the
  mosaic floorplan corresponding to the Baxter permutation which
  produces the mosaic floorplan corresponding to the reverse
  of the given Baxter permutation. Let $\pi$ be a Baxter permutation
  and as above let $f_{\pi}$ represent the mosaic floorplan corresponding
  to $\pi$. Let us take two indices $i$ and $j$ in $\pi$ such that $i<j$.
  If $\pi[i] < \pi[j]$ in $\pi$ we know from the above analysis that $\pi[i]$ is 
  to the left of $\pi[j]$ in $f_{\pi}$. Let $\pi^{r}$ represent the reverse
  of the permutation $\pi$ and let $f_{\pi^{r}}$ represent the mosaic
  floorplan corresponding to the reverse of $\pi$, $\pi^{r}$. In $\pi^{r}$
  the order of numbers $\pi[i]$ and $\pi[j]$ will be reversed, and they
  will be at locations $k=n-i+1$ and $l=n-j+1$ respectively. 
  That is $l < k$ and $\pi^{r}[l] > \pi^{r}[k]$ $\pi[j] > \pi[i]$ we get that in $f_{\pi^{r}}$ 
  $\pi[j]$ is below $\pi[i]$. If $\pi[i] > \pi[j]$ in $\pi$ then we know from the
  proof of the earlier theorem that $\pi[i]$ is below $\pi[j]$.
  In the reverse $\pi^{r}$ the order of numbers $\pi[i]$ and $\pi[j]$ will 
  be reversed. Hence they will be at locations $k=n-i+1$ and 
  $l=n-j+1$ respectively. That is $l < k$ and $\pi^{r}[l] < \pi^{r}[k]$ as
  $\pi[j] < \pi[i]$. This implies that $\pi[j]$ is to the left of $\pi[i]$
  in $f_{\pi^{r}}$. Summarizing this, if the room labeled
  $\pi[i]$ is to the left of $\pi[j]$ in $f_{\pi}$ then 
  the operation which obtains the floorplan corresponding to
  the reverse of $\pi$ will change the relative 
  ordering of these blocks such that $\pi[i]$ will be below
  $\pi[j]$. And similarly if the room labeled $\pi[i]$ is 
  below $\pi[j]$ in $f_{\pi}$ then the operation corresponding
  to the reverse will change their relative ordering such that $\pi[i]$
  is to the left of $\pi[j]$. This corresponds to the rotation by $90^{\circ}$
  clock-wise and then taking mirror image along the horizontal axis.
  Figure
  \ref{fig:GeomEquivReverse} illustrates the above mentioned link
  between reverse and the geometry.

  \begin{figure}
    \centering
    
    \caption{Obtaining a mosaic floorplan corresponding to the reverse of a Baxter permutation}
    \label{fig:GeomEquivReverse}
    
    \includegraphics[scale=0.4]{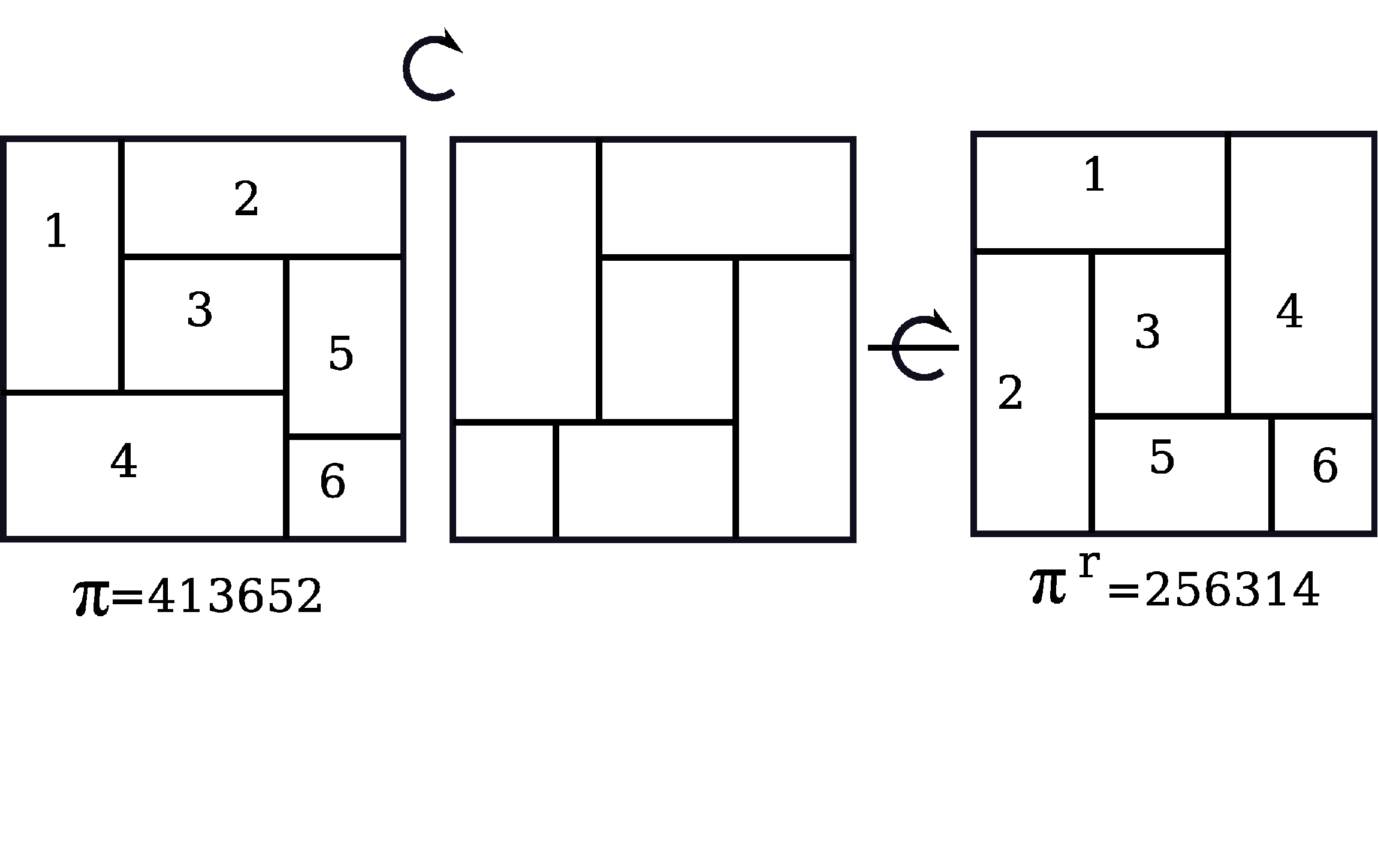}
  \end{figure}
\end{proof}


\section{Summary}

We characterized permutations corresponding to the Abe-label of $\HFO_k$
floorplans. We also proved that $\HFO_k$ floorplans are in bijective
correspondence with skewed generating trees of Order $k$. This gave
us a recurrence relation for the exact number of $\HFO_k$ floorplans
with $n$ rooms and thus a polynomial time algorithm for generating
the count for any given $n$. We obtained a linear time algorithm
for checking if a given permutation is $\HFO_k$ for a particular
value of $k$. The same algorithm can be used to check whether a permutation
is $\HFO_k$ for some unknown $k$ in $O(n^{2}\log n)$ time. 
We extended the neighbourhood moves on $\HFO_k$ floorplans 
for stochastic search methods like simulated annealing on
these family of floorplans.
We also
proved that Baxter permutations are closed under inverse and
reverse.

Even though we were able to obtain a recurrence relation for the exact number
of $\HFO_k$ floorplans with $n$ rooms and thus a polynomial time
algorithm for generating the count for any given $n$, we were not able
to find a closed form expression for the number of distinct $\HFO_k$
floorplans with $n$ rooms. Even for a particular value of $k$
(especially $5$), it would be interesting to see a closed form
expression for the number of distinct $\HFO_k$ floorplans. Another
open question arising from our research is the number
of distinct Uniquely $\HFO_k$ floorplans. We were able to obtain some trivial
lower bounds based on the construction method described in 
the proof of Infinite hierarchy. But no closed form expression
for the number of Uniquely $\HFO_k$ floorplans were obtained.

\bibliographystyle{plain}
\bibliography{/home/sajin/LaTeX/Thesis/references}

\end{document}